    \newcommand{\ihbar}{\imath \hbar}
    \newcommand{\Pe}{\mathbb{P}e_{\leftarrow}}
    \newcommand{\aPe}{\mathbb{P}e_{\rightarrow}}
    \newcommand{\llangle}{\langle \hspace{-0.2em} \langle}
    \newcommand{\rrangle}{\rangle \hspace{-0.2em} \rangle}
    \renewcommand{\S}{{\mathcal H_{\mathcal S}}}
    \newcommand{\E}{{\mathcal H_{\mathcal E}}}
    \newcommand{\SE}{{\mathcal H_{\mathcal S} \otimes \mathcal H_{\mathcal E}}}
    \newcommand{\iint}{\int \!\!\!  \int}
    \newcommand{\card}{\mathrm{card} }
    \newcommand{\Ran}{\mathrm{Ran}}
    \newenvironment{proof}{\noindent \textit{Proof:}}{\hfill $\Box$ \\}
    \newtheorem{prop}{Property}
\begin{document}

    \title{Adiabatic quantum control hampered by entanglement}

    \author{David Viennot}

    \address{Institut UTINAM (CNRS UMR 6213, Universit\'e de Franche-Comt\'e), 41 bis Avenue de
    l'Observatoire, BP1615, 25010 Besan{\c c}on cedex, France}

		\eads{
		\mailto{david.viennot@utinam.cnrs.fr},
		}

\begin{abstract}
We study the defects in adiabatic control of a quantum system caused by the entanglement of the system with its environment. Such defects can be assimilated to decoherence processes due to perturbative couplings between the system and the environment. To analyse these effects, we propose a geometric approach, based on a field theory on the control manifold issued from the higher gauge theory associated with the $C^*$-geometric phases. We study a visualization method to analyse the defects of the adiabatic control based on the drawing of the field strengths of the gauge theory. To illustrate the present methodology we consider the example of the atomic STIRAP (stimulated Raman adiabatic passage) where the controlled atom is entangled with another atom. We study the robustness of the STIRAP effect when the controlled atom is entangled with another one.
\end{abstract}

\section{Introduction}
Quantum control is one of the main research subject of the modern physics. A quantum control problem consists to find how vary the external parameters (as for example parameters of strong laser fields) in order to a quantum system (a spin, an atom or a molecule) evolves to a predetermined target state satisfying the control goal. Such problems can present very important applications in different fields: nanosciences (to drive molecular machines), quantum information (to perform quantum logic gates), and physical chemistry (to perform vibrational cooling, to control chemical reactions). If the target state is an eigenstate of the quantum system, an adiabatic approach \cite{Messiah} is a good strategy to solve quantum control problems, since external parameters are usually slowly varied with respect to the response of the quantum system and since the adiabatic approximation predicts that the wave function remains projected onto an eigenstate during the dynamics. Adiabatic schemes of quantum control have been proposed for quantum computation by holonomic approaches \cite{Zanardi, Lucarelli} or by quantum annealing \cite{Santoro}, and for atomic control by strong laser fields \cite{Guerin}.\\
Real quantum systems are never isolated. The coupling between the quantum system and its environment, even if it is perturbative, can induce defects of the control result with respect to the idealized isolated quantum system. The system and the environment are entangled and the dynamics can be characterized by decoherence processes. The goal of the present paper is the characterization of these defects. Some adiabatic quantum control methods are based on geometric approaches (geometry of fiber bundles \cite{Lucarelli} or topology of eigensurfaces \cite{Guerin}). We want a geometric characterization of the defects induced by the entanglement. Recently, we have proposed a generalization of the geometric phase concept for open and composite quantum systems \cite{Viennot1, Viennot2}. This geometric phase takes its values in the $C^*$-algebra of the operators of the quantum system. In contrast with the usual geometric phase which is associated with a simple gauge theory \cite{Simon}, the $C^*$-geometric phase is associated with an higher gauge theory (a generalization of gauge theory into a category theory context, see for example \cite{Baez}). This higher gauge theory involves some fields on the manifold spanned by the control external parameters, which are introduced in \cite{Viennot1}. In this paper, we do not want to recall the mathematical structure associated with this higher gauge theory (it can be found in \cite{Viennot1}), but we would interpret physically the implicated fields from the viewpoint of the quantum control hampered by entanglement.\\
This paper is organized as follows. Section 2 is devoted to a small review of the adiabatic quantum control for an idealized isolated quantum system. The goal of this section is the introduction of some notations and concepts. Section 3 studies the theoretical properties of the fields associated with the higher gauge theory for the control hampered by entanglement. We study the physical meaning of these fields with respect to the quantum control problem. Section 4 is devoted to a simple but instructive example. The stimulated Raman adiabatic passage (STIRAP) is a solution of a quantum control problem consisting to change the state of a three level atom from the bound state to an excited state by passing through a ``dark'' state, by using two laser Gaussian pulses. We study the robustness of this solution when the atom is entangled with another one which feels the laser fields. We interpret the results by drawing the field strengths of the higher gauge theory to confirm the physical meanings of these fields. Section 5 is a discussion concerning the application of the methodology presented in this paper on a system entangled with a larger environment (by using a small effective Hamiltonians to represent this environment or by working only at the stage of the density matrices). In this paper we use the word ``environment'' with a large meaning. It can signify the large environment of an open quantum system as for the discussion section 5, or it can signify the (small) second part of a bipartite quantum system as for the example treated section 4. We focus on the effect of the entanglement on the control. We have chosen an example with a small environment (bipartite quantum system) in order to avoid the possible effects of the dissipation induced by large environments and enlighten the effects associated only with the entanglement.

\section{Adiabatic quantum control}
In this section, we consider an idealized isolated quantum system controlled by external parameters denoted by $x$. The set of all configurations of $x$ is assumed to form a $\mathcal C^\infty$-manifold $M$ called the control manifold. We denote by $\S$ the Hilbert space of the states of the system (for the sake of simplicity, in the whole of this paper we suppose that $\S$ is finite dimensional). The dynamics of the controlled quantum system is governed by the self-adjoint Hamiltonian $H(x) \in \mathcal L(\S)$ ($\mathcal L(\S)$ denotes the $C^*$-algebra of the operators of $\S$). A control solution is a path $\mathcal C: t \mapsto x(t) \in M$, such that the wave function $\psi(t) \in \S$ solution of the Schr\"odinger equation
\begin{equation}
\ihbar \frac{d\psi}{dt} = H(x(t)) \psi(t) \qquad \psi(0) = \psi_0 \in \S
\end{equation}
becomes, at $t=T$ the end of the control, $\psi(T) = \psi_{target}$. $\psi_{target} \in \S$ is the predetermined target state satisfying the goal of the control. We will suppose that the path $\mathcal C$ is closed (this is a generic situation, we start and we stop with a control system off).

\subsection{Adiabatic approximation}
Let $\{\lambda_a(x)\}_a$ be the instantaneous eigenvalues of $H(x)$ that we suppose as being non-degenerate for all $x \in M$ except eventually for some isolated points in $M$ in a first time. Let $\{\phi_a(x)\}_a$ be the associated normalized eigenvectors.
\begin{equation}
H(x) \phi_a(x) = \lambda_a(x) \phi_a(x)
\end{equation}
If $\psi(0) = \phi_a(x(0))$ and if 
\begin{equation}
\label{adiabcond}
\forall b\not= a, \quad \hbar \sup_{t \in [0,T]} \left| \frac{\langle \phi_a| \partial_\mu \phi_b \rangle \dot x^\mu}{\lambda_b - \lambda_a} \right| \ll 1
\end{equation}
then the wave function satisfies the adiabatic approximation (see for example \cite{Messiah})
\begin{equation}
\psi(T) \simeq e^{-\ihbar^{-1} \int_0^T \lambda_a(x(t))dt} e^{-\oint_{\mathcal C} A_a(x)} \phi_a(x(T))
\end{equation}
where the geometric phase $e^{-\oint_{\mathcal C} A_a(x)}$ discovered by Berry \cite{Berry} is generated by
\begin{equation}
A_a(x) = \langle \phi_a(x)|d\phi_a(x) \rangle \in \Omega^1 M
\end{equation}
$d$ is the exterior differential of $M$ and $\Omega^n M$ denotes the set of differential $n$-forms of $M$. The adiabatic condition (\ref{adiabcond}) implies that the control parameter variations ($\dot x^\mu$, the dot denotes the time derivative) are slow, the non-adiabatic couplings ($\langle \phi_a|\partial_\mu \phi_b \rangle$) are small, and a gap condition between the eigenvalue $\lambda_a$ and the other eigenvalues is satisfied.\\

If $\mathcal C$ passes through a point $x_* \in M$ where $\lambda_a$ and $\lambda_b$ cross, $\lambda_a(x_*) = \lambda_b(x_*)$, with a rapid passage in the neighbourhood of $x_*$, and with the adiabatic condition (\ref{adiabcond}) satisfied elsewhere; then we have
\begin{eqnarray}
\psi(T) & \simeq & e^{-\ihbar^{-1} \int_0^{t_*} \lambda_a(x(t))dt} e^{- \int_{\mathcal C:x(0) \to x_*} A_a(x)} \nonumber \\
& & \qquad \times e^{-\ihbar^{-1} \int_{t_*}^T \lambda_b(x(t))dt} e^{-\int_{\mathcal C:x_* \to x(T)} A_b(x)} \psi_b(x(T))
\end{eqnarray}
The rapid adiabatic passage method of quantum control \cite{Guerin} is based on this equation. To reach $\psi_{target} = \phi_b(x(T))$ it needs to find a path $\mathcal C$ passing through a crossing point of $\lambda_a$ and $\lambda_b$ (we can also pass by several crossing points with some intermediate eigenstates).

\subsection{Geometric approach}
As shown by Simon \cite{Simon}, the geometric phase of the adiabatic approximation is associated with a gauge theory described by a connection on a principal $U(1)$-bundle ($U(1)$ is the set of unit module complex numbers). $A_a \in \Omega^1 M$ plays the role of a gauge potential which defines a gauge field
\begin{equation}
F_a = dA_a = \langle d\phi_a| \wedge |d\phi_a \rangle \in \Omega^2 M
\end{equation}
$\wedge$ denotes the exterior product of differential forms. $F_a$ is called the adiabatic curvature.
\begin{prop}
The adiabatic curvature $F_a(x)$ is a measure at the point $x$ of the non-adiabaticity involving the state $\phi_a$.
\end{prop}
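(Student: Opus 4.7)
The plan is to make the claim precise by turning $F_a$ into an explicit sum over the non-adiabatic couplings $\langle \phi_b | \partial_\mu \phi_a\rangle$ that control the validity of (\ref{adiabcond}). The interpretational content of the Property is the assertion that the two-form $F_a$ encodes, at each point $x$, the magnitude of the couplings between $\phi_a$ and the other instantaneous eigenstates, independently of the speed $\dot x^\mu$ at which the path is traversed.

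First I would compute $F_a = dA_a$ in local coordinates on $M$. Using $A_{a,\mu} = \langle \phi_a | \partial_\mu \phi_a\rangle$ and the fact that mixed partials of $\phi_a$ cancel, one obtains
\begin{equation}
F_{a,\mu\nu} = \langle \partial_\mu \phi_a | \partial_\nu \phi_a\rangle - \langle \partial_\nu \phi_a | \partial_\mu \phi_a\rangle .
\end{equation}
Next I insert the resolution of identity $1 = \sum_b |\phi_b\rangle\langle \phi_b|$ between the bra and the ket. The diagonal $b=a$ contribution vanishes because $\langle \phi_a | \partial_\mu \phi_a\rangle$ is purely imaginary (normalization), so
\begin{equation}
F_{a,\mu\nu} = \sum_{b \neq a} \Big( \langle \partial_\mu \phi_a | \phi_b\rangle \langle \phi_b | \partial_\nu \phi_a\rangle - \langle \partial_\nu \phi_a | \phi_b\rangle \langle \phi_b | \partial_\mu \phi_a\rangle \Big) .
\end{equation}
This already exhibits $F_a$ as a bilinear form in the non-adiabatic couplings appearing in (\ref{adiabcond}).

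To finish, I would eliminate the derivatives of the eigenvectors in favour of the Hamiltonian. Differentiating $H\phi_a = \lambda_a \phi_a$ and projecting on $\phi_b$ ($b \neq a$) yields the standard identity
\begin{equation}
\langle \phi_b | \partial_\mu \phi_a\rangle = \frac{\langle \phi_b | \partial_\mu H | \phi_a\rangle}{\lambda_a - \lambda_b} ,
\end{equation}
so that
\begin{equation}
F_{a,\mu\nu} = \sum_{b \neq a} \frac{\langle \phi_a | \partial_\mu H | \phi_b\rangle \langle \phi_b | \partial_\nu H | \phi_a\rangle - (\mu \leftrightarrow \nu)}{(\lambda_a - \lambda_b)^2} .
\end{equation}
The right-hand side is manifestly an antisymmetric combination of the pointwise quantities whose smallness is required by (\ref{adiabcond}): large matrix elements of $\partial H$ between $\phi_a$ and a neighbouring eigenstate, or small energy gaps $\lambda_a - \lambda_b$, directly inflate $F_a(x)$. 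Conversely, if $F_a(x)$ is small, each term in this positive-type sum must be small, so every non-adiabatic coupling normalized by the corresponding gap is small at $x$. This is exactly the content of the Property: $F_a(x)$ quantifies, at the point $x$ and independently of the control velocity, the obstruction to the adiabatic behaviour of the eigenstate $\phi_a$.

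The only subtle point — and the one I would treat as the ``main obstacle'' — is that the Property is an interpretative statement rather than an equality, so one must be careful about what ``measure of non-adiabaticity'' means: $F_a$ bounds the intrinsic (geometric) part of the adiabatic condition but not the kinematic factor $\dot x^\mu$. I would therefore state the conclusion as: the adiabatic condition (\ref{adiabcond}) at $x$ is controlled by the same matrix elements and gaps that build $F_a(x)$, so that the pointwise size of $F_a$ provides a velocity-independent diagnostic of local non-adiabaticity.
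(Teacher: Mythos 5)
Your derivation follows essentially the same route as the paper: insert the closure relation $\sum_b |\phi_b\rangle\langle\phi_b|$ into $F_a = \langle d\phi_a|\wedge|d\phi_a\rangle$, drop the $b=a$ term, and use the differentiated eigenvalue equation to rewrite the couplings as $\langle\phi_b|dH|\phi_a\rangle/(\lambda_a-\lambda_b)$, arriving at the same final formula $F_a = \sum_{b\neq a}\langle\phi_a|dH|\phi_b\rangle\wedge\langle\phi_b|dH|\phi_a\rangle/(\lambda_a-\lambda_b)^2$ (the paper then phrases the link to condition (\ref{adiabcond}) as $\hbar^2|i_vF_a|\ll 1$, which matches your velocity-independence remark). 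One caveat: your converse claim that the sum is of ``positive type,'' so that smallness of $F_a(x)$ forces every individual coupling to be small, is an overstatement --- the antisymmetrized combination $\langle\phi_a|\partial_\mu H|\phi_b\rangle\langle\phi_b|\partial_\nu H|\phi_a\rangle - (\mu\leftrightarrow\nu)$ is proportional to the imaginary part of a product and can vanish (e.g.\ for real Hamiltonians away from degeneracies) even when the couplings themselves are large, so $F_a$ detects non-adiabaticity only in the forward direction asserted by the Property.
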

\begin{proof}
$\{\phi_b\}_b$ constitutes an orthonormal basis of $\S$. By using the closure relation we have
\begin{eqnarray}
F_a & = & \langle d\phi_a|\wedge|d\phi_a \rangle \\
& = & \sum_b \langle d\phi_a|\phi_b \rangle \wedge \langle \phi_b |d\phi_a \rangle \\
& = & - \sum_b \langle \phi_a|d\phi_b \rangle \wedge \langle \phi_b|d\phi_a \rangle
\end{eqnarray}
$\langle \phi_a |d\phi_a \rangle \wedge \langle \phi_a|d\phi_a \rangle = 0$ and moreover we have for $b\not=a$
\begin{equation}
H\phi_a = \lambda_a \phi_a  \Rightarrow \langle \phi_b|dH|\phi_a \rangle + \lambda_b \langle \phi_b|d\phi_a \rangle = \lambda_a\langle \phi_b|d\phi_a \rangle
\end{equation}
We have then
\begin{equation}
F_a = \sum_{b \not= a} \frac{\langle \phi_a|dH|\phi_b\rangle \wedge \langle \phi_b|dH|\phi_a \rangle}{(\lambda_a - \lambda_b)^2} 
\end{equation}
The adiabatic condition (\ref{adiabcond}) is then equivalent to $\hbar^2 |i_v F_a | \ll 1$ where $i$ is the interior product and $v = \dot x^\mu \frac{\partial}{\partial x^\mu}$ is the speed tangent vector of $\mathcal C$.
\end{proof}
$F_a$ diverges at the crossing points of $\lambda_a$ (which are the singularities of the Simon principal bundle) if the non-adiabatic couplings are different from zero. In the case where $\dim M = 2$, in place of drawing the eigenvalue surfaces like in \cite{Guerin} we can draw the field strength densities $F_{a \mu \nu}$ to locate the crossing points. The interests of studying the adiabatic curvature in place of the eigenvalue surfaces are: $F_a$ is zero at a crossing point where the adiabatic couplings are zero (such a crossing does not induce rapid transitions); $F_a$ shows the distribution of the non-adiabatic couplings around the crossings \cite{Viennot3}; and $F_a$ can be generalized to some non-hermitian cases where the eigenvalue surfaces are complex surfaces \cite{Viennot4}.\\

If now the eigenvalue is degenerate with the associated eigenvectors $\{\phi_a\}_{a \in I}$ ($I$ is a set of indexes), or if we consider a weaker adiabatic approximation consisting to assume that the wave function remains projected onto a group of several eigenvectors $\{\phi_a\}_{a \in I}$; then the gauge potential becomes
\begin{equation}
A_I \in \Omega^1(M,\mathfrak u(n)) \qquad A_{I,ab} = \langle \phi_a|d\phi_b \rangle \quad \forall a,b\in I
\end{equation}
where $\mathfrak u(n)$ is the set of anti-self-adjoint matrices of order $n$ (the number of elements in $I$) and $A_{I,ab}$ denotes the matricial element of $A_I$ at the row $a$ and the column $b$. The adiabatic approximation becomes (for a single degenerate eigenvalue)
\begin{equation}
\psi(T) \simeq  e^{-\ihbar^{-1} \int_0^T \lambda_a(x(t))dt} \sum_{b \in I} \left[\Pe^{-\oint_{\mathcal C} A_I(x)} \right]_{ba} \phi_b(x(T))
\end{equation}
where $\Pe$ denotes the path-ordered exponential, i.e. the Dyson series along a path:
\begin{equation}
\frac{d}{dt} \Pe^{- \int_{\mathcal C:x(0) \to x(t)} A_I(x)} = - A_{I \mu}(x(t)) \dot x^\mu(t) \Pe^{- \int_{\mathcal C:x(0) \to x(t)} A_I(x)}
\end{equation}
The ``non-abelian'' geometric phase is the Wilson loop $\Pe^{-\oint_{\mathcal C} A_I(x)} \in U(n)$ ($U(n)$ is the group of unitary matrices of order $n$) which is associated with a connection on a principal $U(n)$-bundle. In holonomic quantum computation \cite{Zanardi, Lucarelli}, the Wilson loops are used to perform quantum logic gates. The non-abelian adiabatic curvature is
\begin{equation}
F_I = dA_I + A_I \wedge A_I \in \Omega^2 (M,\mathfrak u(n))
\end{equation}
\begin{prop}
The non-abelian adiabatic curvature $F_I(x)$ is a measure at the point $x$ of the non-adiabaticity between the space spanned by $\{\phi_a\}_{a \in I}$ and its orthogonal supplement, but it is not sensitive to the non-adiabaticity inner the space spanned by $\{\phi_a\}_{a \in I}$.
\end{prop}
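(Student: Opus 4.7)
The plan is to reduce the claim to a calculation that mirrors the abelian proof of Property 1, but with the closure relation split as $1 = \sum_{c \in I} |\phi_c\rangle \langle \phi_c| + \sum_{c \notin I} |\phi_c\rangle \langle \phi_c|$, so that the first piece is exactly cancelled by the quadratic term $A_I \wedge A_I$ and only the second piece survives. This will make manifest that $F_I$ only ``sees'' couplings between the space $\mathrm{span}\{\phi_a\}_{a \in I}$ and its orthogonal supplement.

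First I would write the matrix elements of the curvature component by component: for $a,b \in I$,
\begin{equation}
F_{I,ab} = d A_{I,ab} + \sum_{c \in I} A_{I,ac} \wedge A_{I,cb}
= d\langle \phi_a | d \phi_b\rangle + \sum_{c \in I} \langle \phi_a|d\phi_c\rangle \wedge \langle \phi_c|d\phi_b\rangle .
\end{equation}
Then I would expand $d\langle \phi_a | d\phi_b\rangle = \langle d\phi_a | \wedge | d\phi_b\rangle$ and insert the full closure relation of $\S$, using the antihermitian identity $\langle d\phi_a|\phi_c\rangle = -\langle \phi_a|d\phi_c\rangle$, to obtain $d\langle \phi_a|d\phi_b\rangle = -\sum_{c} \langle \phi_a|d\phi_c\rangle \wedge \langle \phi_c|d\phi_b\rangle$. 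Splitting this sum at $c \in I$ versus $c \notin I$, the $c \in I$ part cancels the quadratic contribution exactly, leaving
\begin{equation}
F_{I,ab} = - \sum_{c \notin I} \langle \phi_a|d\phi_c\rangle \wedge \langle \phi_c|d\phi_b\rangle .
\end{equation}

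Next, to express this in a form analogous to the end of the proof of Property 1, I would differentiate the eigenvalue equation on both sides: since $a,b \in I$ share the common eigenvalue $\lambda$ (or, more generally, the assumed group is isolated from the spectrum outside $I$), one has for $c \notin I$
\begin{equation}
\langle \phi_a|d\phi_c\rangle = \frac{\langle \phi_a|dH|\phi_c\rangle}{\lambda - \lambda_c}, \qquad
\langle \phi_c|d\phi_b\rangle = \frac{\langle \phi_c|dH|\phi_b\rangle}{\lambda - \lambda_c},
\end{equation}
so that
\begin{equation}
F_{I,ab} = \sum_{c \notin I} \frac{\langle \phi_a|dH|\phi_c\rangle \wedge \langle \phi_c|dH|\phi_b\rangle}{(\lambda - \lambda_c)^2}.
\end{equation}
This is manifestly built only from non-adiabatic couplings between states inside $I$ and states outside $I$; couplings $\langle \phi_a|dH|\phi_{a'}\rangle$ with $a,a' \in I$ never appear, proving the stated insensitivity.

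The main obstacle I anticipate is purely bookkeeping: keeping track of the signs coming from $A_I \in \Omega^1(M,\mathfrak u(n))$ (anti-self-adjointness of the matrix-valued 1-form) together with the wedge products of 1-forms, so that the $c \in I$ contribution from the closure resolution really does cancel $A_I \wedge A_I$ on the nose. A secondary point to state carefully is the exact sense in which the remark applies when $I$ labels a ``group of eigenstates'' rather than a single degenerate level: one must assume $\lambda_a = \lambda_b$ for $a,b \in I$ so that the identity used to eliminate $\langle \phi_c|d\phi_b\rangle$ is valid; otherwise the interpretation survives only in a block sense, and the vanishing of internal couplings should be phrased accordingly.
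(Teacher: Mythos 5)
Your proposal is correct and follows essentially the same route as the paper: expand $dA_I$ via the closure relation, let the $c\in I$ part cancel $A_I\wedge A_I$, and convert the surviving off-block couplings with the differentiated eigenvalue equation. The only remark worth making is that your closing caveat is unnecessary --- the identity $\langle\phi_c|d\phi_b\rangle=\langle\phi_c|dH|\phi_b\rangle/(\lambda_b-\lambda_c)$ holds for each $b$ separately, so no common eigenvalue over $I$ is required and the general denominator is $(\lambda_a-\lambda_d)(\lambda_b-\lambda_d)$, exactly as the paper writes it.
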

\begin{proof}
After some algebra similar to the non-degenerate case, we find
\begin{eqnarray}
F_{I,ab} & = & \left( \langle \partial_\mu \phi_a|\partial_\nu \phi_b \rangle + \sum_{c \in I} \langle \phi_a|\partial_\mu \phi_c \rangle \langle \phi_c|\partial_\nu \phi_b \rangle \right) dx^\mu \wedge dx^\nu \\
& = & \sum_{d \not\in I} \frac{\langle \phi_a|dH|\phi_d \rangle \wedge \langle \phi_d|dH|\phi_b \rangle}{(\lambda_a-\lambda_d)(\lambda_b-\lambda_d)}
\end{eqnarray}
\end{proof}
Remark: if the set of vectors is complete, i.e. $I=\{1,...,\dim \mathcal \S\}$, then $F_I = 0$.

\section{Adiabatic quantum control hampered by entanglement}
We consider now that the quantum system is in ``contact'' with another quantum ``object'' that we call the ``environment''. We call ``universe'' the composite system constituted by the quantum system and by the environment. We denote by $\E$ the Hilbert space of the environment and by $\SE$ the Hilbert space of the universe. The dynamics of the universe is governed by the self-adjoint Hamiltonian
\begin{equation}
H_{\mathcal U}(x) = H_{\mathcal S}(x) \otimes 1_\E + 1_\S \otimes H_{\mathcal E}(x) + V(x)
\end{equation}
where $H_{\mathcal S} \in \mathcal L(\S)$ and $H_{\mathcal E} \in \mathcal L(\E)$ are the Hamiltonians of the system and of the environment when they are separated, and $V(x) \in \mathcal L(\SE)$ is the coupling operator. Let $\psi(t) \in \SE$ be the solution of the Schr\"odinger equation of the universe. We are interested by the state ``reduced'' to the system which is represented by the density matrix
\begin{equation}
\rho_\psi(t) = \tr_\E |\psi(t) \rrangle \llangle \psi(t)|
\end{equation}
where $\llangle .|. \rrangle$ denotes the scalar product of $\SE$ and $\tr_\E$ denotes the partial trace on $\E$. If $\rho_\psi$ has a rank equal to one, there exists $\varphi \in \S$ such that $\rho_\psi = |\varphi \rangle \langle \varphi|$ ($\rho_\psi$ is said to be a pure state), and $\varphi$ is the single state which can be attributed to the system. If the rank of $\rho_\psi$ is larger that one, we cannot attribute a single state to the system ($\rho_\psi$ is said to be a mixed state), the system and the environment are entangled. If $V\not=0$ the dynamic transforms pure states into mixed states.\\
The role of the partial trace on $\E$ is to lose information concerning the environment. Indeed, the ``experimentalist'' controls only the system, and not directly the environment (even if this environment ``feels'' the control). The adiabatic regime is assumed for the system, not for the universe. We consider then a weaker adiabatic assumption consisting to assume an adiabatic evolution for the system but not necessarily for the environment. The $C^*$-geometric phases have been introduced in \cite{Viennot1} as a framework describing this situation.

\subsection{$C^*$-geometric phases}
Let $\{\lambda_a(x)\}_a$ be the instantaneous eigenvalues of the universe (we suppose that they are not degenerate for all $x \in M$ except for some isolated points) and $\{\phi_a\}$ be the associated eigenvectors.
\begin{equation}
H_{\mathcal U}(x) \phi_a(x) = \lambda_a(x) \phi_a(x)
\end{equation}
Following \cite{Viennot1} we can define a geometric phase with values in the $C^*$-algebra $\mathfrak s = \mathcal L(\S)$ as being
\begin{equation}
\aPe^{-\oint_{\mathcal C} \mathcal A_a(x)} \in \mathfrak s
\end{equation}
where $\aPe$ is the path anti-ordered exponential, i.e.
\begin{equation}
\frac{d}{dt} \aPe^{- \int_{\mathcal C:x(0)\to x(t)} \mathcal A_a(x)} = - \aPe^{- \int_{\mathcal C:x(0)\to x(t)} \mathcal A_a(x)} \mathcal A_{a\mu}(x(t)) \dot x^\mu(t)
\end{equation}
Let $\rho_a(x) = \tr_\E |\phi_a(x) \rrangle \llangle \phi_a(x)|$ be the density eigenmatrix. The generator of the $C^*$-geometric phase is defined by
\begin{equation}
\mathcal A_a = \tr_\E \left(|d\phi_a \rrangle \llangle \phi_a| \right) \rho_a^{-1} \in \Omega^1(M,\mathfrak s)
\end{equation}
where $\rho_a^{-1}$ is the pseudo-inverse of $\rho_a$, i.e. $\rho_a^{-1} \rho_a = 1_\S - P_{\ker \rho_a}$ ($P_{\ker \rho_a}$ is the orthogonal projector onto the kernel of $\rho_a$).

\subsection{Adiabatic fields}
In \cite{Viennot1} we have shown that the $C^*$-geometric phases are associated with an higher gauge theory (a connective structure on a 2-bundle \cite{Baez}) which is characterized by two fields:
\begin{itemize}
\item the adiabatic curving:
\begin{equation}
B_a = d\mathcal A_a - \mathcal A_a \wedge \mathcal A_a \in \Omega^2(M,\mathfrak s)
\end{equation}
\item the adiabatic fake curvature:
\begin{equation}
F_a = dA_a - A_a \wedge A_a - B_a \in \Omega^2(M,\mathfrak s)
\end{equation}
\end{itemize}
where the reduced potential is defined by
\begin{equation}
A_a = \tr_\E \left(P_a|d\phi_a \rrangle \llangle \phi_a| \right) \rho_a^{-1}
\end{equation}
where $P_a$ is the projection onto the eigensubspace associated with $\lambda_a$ which is considered as a ``non-commutative eigenvalue'' (see \cite{Viennot1}), usually we can expect that $P_a$ is simply $|\phi_a \rrangle \llangle \phi_a|$ ($A_a$ is then $P_a$ multiplied by the usual geometric phase generator of the universe). Since these fields are $\mathfrak s$-valued, they have statistical interpretations with respect to mixed states associated with the entanglement. More precisely, the physical meaning is not directly supported by these fields, but by their statistical averages:
\begin{equation}
\tr_\S (\rho_a B_a) \in \Omega^2 M
\end{equation}
\begin{equation}
\tr_\S (\rho_a F_a) \in \Omega^2 M
\end{equation}
Since the entanglement of the quantum system with the environment is responsible for a lost of information in the partial trace $\tr_\E$, it is interesting to consider also the von Neuman entropy of the density eigenmatrix:
\begin{equation}
- \tr_\S (\rho_a \ln \rho_a) \in \Omega^0 M
\end{equation}
which can be viewed as a measure of the information lack for the system where its mixed state is described by the density eigenmatrix $\rho_a$.\\

By construction, the average adiabatic fake curvature seems to have the same interpretation than the usual adiabatic curvature of isolated systems. It measures the local non-adiabaticity. This is well the fake curvature which must be considered and not $dA_a - A_a \wedge A_a$ (the correction by $B_a$ is necessary). This is induced by the mathematical structure of an higher gauge theory (see \cite{Baez}) but with a more pragmatic approach we will justify this fact by the examples which follow in the rest of this paper.\\

The role of the adiabatic curving is enlighten by the following property.
\begin{prop}
The average adiabatic curving $\tr_\S(\rho_a(x) B_a(x))$ is a measure of the entropy variation associated with $\phi_a$ and which is induced by variations of the control parameters in the neighbourhood of $x$.
\end{prop}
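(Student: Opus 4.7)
The plan is to exploit the Schmidt decomposition of the universe's eigenstate, $|\phi_a(x)\rrangle = \sum_k \sqrt{p_k(x)}\,|k(x)\rangle\otimes|\tilde k(x)\rangle$, which diagonalises $\rho_a = \sum_k p_k|k\rangle\langle k|$ and casts the von Neumann entropy in Shannon form $S_a = -\sum_k p_k\ln p_k$, with differential $dS_a = -\sum_k dp_k\,\ln p_k$ after using $\sum_k dp_k = 0$. The entropic content is then carried entirely by the entanglement spectrum $\{p_k\}$ and its infinitesimal variations $\{dp_k\}$ along the control manifold, so the proof should recast $\tr_\S(\rho_a B_a)$ as a 2-form whose non-vanishing is governed by those variations.

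First I would simplify $B_a$ algebraically. Writing $\omega_a := \tr_\E(|d\phi_a\rrangle\llangle\phi_a|)$ so that $\mathcal{A}_a = \omega_a\rho_a^{-1}$, the identity $\omega_a + \omega_a^\dagger = d\rho_a$ (obtained by partial-tracing $d(|\phi_a\rrangle\llangle\phi_a|)$), combined with $d\rho_a^{-1} = -\rho_a^{-1}(d\rho_a)\rho_a^{-1}$ and the graded Leibniz rule, shows that the $\omega_a\rho_a^{-1}\wedge\omega_a\rho_a^{-1}$ pieces of $d\mathcal{A}_a$ and of $\mathcal{A}_a\wedge\mathcal{A}_a$ cancel, leaving
\[
B_a = d\omega_a\,\rho_a^{-1} + \omega_a\rho_a^{-1}\wedge\omega_a^\dagger\rho_a^{-1}.
\]
(The non-invertible case is handled identically via the pseudo-inverse, restricting to the range of $\rho_a$.) Averaging with $\rho_a$ and using cyclicity of the $\S$-trace together with $\tr_\S(\omega_a) = \llangle\phi_a|d\phi_a\rrangle$, one arrives at
\[
\tr_\S(\rho_a B_a) = \llangle d\phi_a|\wedge|d\phi_a\rrangle + \tr_\S(\omega_a\rho_a^{-1}\omega_a^\dagger),
\]
the first summand being the universe's pure-state Berry curvature.

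Second, I would expand both summands in the Schmidt basis with matrix elements $(\omega_a)_{ij} = \tfrac{1}{2}dp_j\,\delta_{ij} + p_j\langle i|dj\rangle + \sqrt{p_ip_j}\,\langle\tilde j|\tilde{di}\rangle$, applying the orthonormality-induced antihermiticity relations $\overline{\langle i|dj\rangle}+\langle j|di\rangle = 0$ and $\overline{\langle\tilde i|\tilde{dj}\rangle}+\langle\tilde j|\tilde{di}\rangle = 0$. A systematic accounting of the Grassmann signs of the 1-form wedges and of the matrix cyclicity then produces a pair-wise cancellation of the terms that do not reflect a variation of the Schmidt spectrum. The surviving 2-form depends on the entanglement spectrum $\{p_k\}$ and on the joint motion of the system and environment Schmidt frames; since, in the adiabatic dynamics governed by $H_\mathcal{U}(x)$, this joint motion is the infinitesimal mechanism through which $\{p_k\}$ --- and hence $S_a$ --- varies with $x$, the resulting 2-form is a local quantifier of the entropy variation induced by the control in a neighbourhood of $x$.

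The hard part will be the combinatorial cancellation in that last step. The raw Schmidt expansion mixes four kinds of 1-forms ($dp_k$, the Schmidt connections $\langle i|dj\rangle$ and $\langle\tilde i|\tilde{dj}\rangle$, and their conjugates), produces nine classes of quadratic terms weighted by $p_k$ and $\sqrt{p_ip_j}$, and the collapse to an entropy-variation 2-form is not manifest from the definitions. Carrying the cancellations through cleanly --- while tracking both the Grassmann signs from wedging 1-forms and the signs from Hermitian conjugation through the trace --- is the technical heart of the argument; once it is performed, the identification with the Shannon-entropy differential $dS_a = -\sum_k dp_k\ln p_k$ follows by reading off the surviving expression.
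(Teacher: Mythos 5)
Your proposal diverges from the paper's argument and, more importantly, aims at the wrong notion of ``entropy variation'', so the plan as stated cannot close. The paper's proof is short and global in flavour: it takes an infinitesimal closed loop $\mathcal C_x$ bounding a surface $\mathcal S_x$ of area $\Delta$, identifies the parallel transport of $\rho_a(x)$ around $\mathcal C_x$ with $e^{\iint_{\mathcal S_x}B_a}\rho_a(x)$ via Stokes, expands $\ln\bigl(e^{\iint_{\mathcal S_x}B_a}\rho_a\bigr)$ by Baker--Campbell--Hausdorff, kills all commutator terms under $\tr_\S(\rho_a\,\cdot)$ by cyclicity (since $\rho_a$ and $\ln\rho_a$ commute), and concludes
$\tr_\S\bigl(\rho_a\iint_{\mathcal S_x}B_a\bigr)=S\bigl(\rho_a\,\big\|\,e^{\iint_{\mathcal S_x}B_a}\rho_a\bigr)+\mathcal O(\Delta^2)$,
i.e.\ the average curving is the leading term of the \emph{relative entropy between $\rho_a$ and its holonomy image}. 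That is the precise sense of ``entropy variation induced by variations of the control parameters near $x$'' --- what the paper calls kinematic decoherence. Your proposal instead tries to reduce $\tr_\S(\rho_a B_a)$ to the differential $dS_a=-\sum_k dp_k\ln p_k$ of the eigenentropy, which is a different quantity (the paper's ``local decoherence''). The paper's own Section 3.3.2 is a counterexample to your plan: for a Schmidt-decomposed eigenvector with $x$-independent probabilities $p_i$ one has $dS_a\equiv 0$ on all of $M$, yet $\tr_\S(\rho_a B_a)=\tr_{\mathbb C^n}\bigl(\varpi^2(F_{\mathcal E,I}-A_{\mathcal S,I}\wedge A_{\mathcal S,I})\bigr)$ is generically nonzero. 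So the ``pair-wise cancellation of the terms that do not reflect a variation of the Schmidt spectrum'' that you defer to the technical heart of the argument does not exist; the surviving terms are built from the environment curvature and the system connection, not from $dp_k$. There is also a type mismatch signalling the same problem: $\tr_\S(\rho_a B_a)$ is a 2-form while $dS_a$ is a 1-form, so no identification of the kind you describe can be ``read off''.

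Your first, algebraic step is essentially sound (with $\omega_a=\tr_\E(|d\phi_a\rrangle\llangle\phi_a|)$ and $d\rho_a=\omega_a+\omega_a^\dagger$ one does get $B_a=(d\omega_a)\rho_a^{-1}+\omega_a\rho_a^{-1}\wedge\omega_a^\dagger\rho_a^{-1}$ for invertible $\rho_a$, and the trace of the first piece is the universe Berry curvature), and such a decomposition could be a useful lemma elsewhere. But to prove the proposition you need to compare $\rho_a$ with its parallel transport around a loop, not to differentiate its spectrum along $M$. The missing idea is the relative-entropy interpretation: introduce the infinitesimal Wilson loop $e^{\oint_{\mathcal C_x}\mathcal A_a}\simeq e^{\iint_{\mathcal S_x}B_a}$ acting on $\rho_a$, and show that averaging $B_a$ against $\rho_a$ computes $-\tr_\S\bigl(\rho_a(\ln\rho_a-\ln(e^{\iint_{\mathcal S_x}B_a}\rho_a))\bigr)$ to first order in the area. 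Without that step the statement being proved is not the one asserted.
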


\begin{proof}
Let $\mathcal C_x$ be an infinitesimal closed loop in $M$, starting and ending at $x \in M$. Let $\mathcal S_x$ be a surface in $M$ having $\mathcal C_x$ as boundary. We denote by $\Delta$ the area of $\mathcal S_x$ which is in the neighbourhood of zero. Let $e^{\iint_{\mathcal S_x} B_a} \rho_a(x) \simeq e^{\oint_{\mathcal C_x} \mathcal A_a} \rho_a(x)$ be the density matrix obtained by the ``parallel transport'' of $\rho_a(x)$ along $\mathcal C_x$ (since the loop $\mathcal C_x$ is infinitesimal the path-ordered exponential is approximately equal to the matrix exponential, and the Wilson loop $e^{\oint_{\mathcal C_x} \mathcal A_a}$ is approximately equal to $e^{\iint_{\mathcal S_x} B_a}$ by a Stokes theorem). By using the Baker-Campbell-Hausdorff formula \cite{BCH}, we have
\begin{eqnarray}
\ln \left(e^{\iint_{\mathcal S_x} B_a} \rho_a \right) & = & \iint_{\mathcal S_x} B_a + \ln \rho_a + \frac{1}{2} \left[ \iint_{\mathcal S_x} B_a, \ln \rho_a \right] \nonumber \\
& & + \frac{1}{12} \left[ \iint_{\mathcal S_x} B_a, \left[ \iint_{\mathcal S_x} B_a, \ln \rho_a \right] \right] \nonumber \\
& & - \frac{1}{12} \left[ \ln \rho_a, \left[ \iint_{\mathcal S_x} B_a, \ln \rho_a \right] \right] \nonumber \\
& & + ...
\end{eqnarray}
We have then
\begin{eqnarray}
\tr_\S \left(\rho_a \ln \left(e^{\iint_{\mathcal S_x} B_a} \rho_a \right) \right)  & =  & \tr_\S \left(\rho_a \iint_{\mathcal S_x} B_a \right) \nonumber \\
& & + \tr_\S \left( \rho_a \ln \rho_a \right) + \mathcal O(\Delta^2)
\end{eqnarray}
because of the cyclicity of the trace we have $\tr(\rho[B,\rho])=\tr(\rho B\ln \rho)-\tr(\rho \ln \rho B) = \tr(\ln \rho \rho B) - \tr(\rho \ln \rho B) = 0$ ($\rho \ln \rho = \ln \rho \rho)$). We have similar calculations for higher orders. Finally we have
\begin{eqnarray}
& & \tr_\S \left(\rho_a \iint_{\mathcal S_x} B_a \right) \nonumber \\
& & \qquad = - \tr_\S \left(\rho_a \left(\ln \rho_a - \ln \left(e^{\iint_{\mathcal S_x} B_a} \rho_a \right) \right) \right) + \mathcal O(\Delta^2) \\
& & \qquad = S\left(\rho_a \left\| e^{\iint_{\mathcal S_x} B_a} \rho_a \right. \right) + \mathcal O(\Delta^2)
\end{eqnarray}
where $S(\rho\|\tau) = -\tr(\rho(\ln \rho - \ln \tau))$ is the relative entropy (see \cite{Bengtsson}). $\tr_\S \left(\rho_a \iint_{\mathcal S_x} B_a \right)$ is then the relative entropy of $\rho_a(x)$ with respect to its parallel transport along an infinitesimal loop passing through $x$. By writing $\iint_{\mathcal S_x} B_a = B_{a;12} \Delta + \mathcal O(\Delta^2)$ (the indices $12$ being associated with local coordinates along $\mathcal S_x$) we see that $\tr_\S (\rho_a(x) B_a(x))$ is a measure of the entropy variation induced by the transport of $\rho_a$ in the neighbourhood of $x$.
\end{proof}
The increase of the entropy is associated with an increase of the entanglement between the system and the environment (and dynamically it is associated with decoherence processes). In quantum control, we can define the decoherence as a dynamical process associated with an increase of the entropy and of the entanglement. We have two kinds of decoherence, a ``\textit{local decoherence}'' associated with $- \tr_\S (\rho_a(x) \ln \rho_a(x))$ (decoherence induced by the point $x$) and a ``\textit{kinematic decoherence}'' associated with $\tr_\S (\rho_a(x) B_a(x))$ (decoherence induced by loops passing through $x$). 

\subsection{Two reference cases}
In order to illustrate the roles of the curving and of the fake curvature, and to enlighten their interpretations, we consider two simple cases where these fields can be expressed by using the curvatures of the system and of the environment.

\subsubsection{Factorizable eigenstate:}
\label{factorizable}
We suppose that an eigenvector of the universe is $\phi_a(x) = \zeta_i(x) \otimes \xi_\alpha(x)$ where $\zeta_i(x)$ is an eigenvector of $H_{\mathcal S}(x)$ (associated with a non-degenerate eigenvalue) and $\xi_\alpha(x)$ is an eigenvector of $H_{\mathcal E}(x)$ (associated with a non-degenerate eigenvalue). We do not need to suppose that the other eigenvectors of the universe have a same decomposition. In that case, the density eigenmatrix is the projection (pure state) $\rho_a = |\zeta_i \rangle \langle \zeta_i| = P_{\zeta_i}$ and $P_a = P_{\zeta_i} \otimes P_{\xi_\alpha}$ ($P_{\xi_\alpha} = |\xi_\alpha \rangle \langle \xi_\alpha|$). Since the density matrix is a pure state, its von Neuman entropy is zero and no local decoherence associated with an adiabatic approximation involving only $\phi_a$ occurs. The gauge potential and the reduced potential are
\begin{eqnarray}
\mathcal A_a & = & \tilde {\mathcal A}_i + A_{\mathcal E,\alpha} P_{\zeta_i} \\
A_a & = & (A_{\mathcal S,i} + A_{\mathcal E,\alpha}) P_{\zeta_i}
\end{eqnarray}
where $\tilde {\mathcal A}_i = |d\zeta_i \rangle \langle \zeta_i| \in \Omega^1(M,\mathfrak s)$ is the $C^*$-geometric phase generator for the system without environment, $A_{\mathcal S,i} = \langle \zeta_i|d\zeta_i\rangle \in \Omega^1 M$ is the (usual) geometric phase generator for the isolated system, and $A_{\mathcal E,\alpha} = \langle \xi_\alpha |d\xi_\alpha \rangle \in \Omega^1 M$ is the (usual) geometric phase generator for the isolated environment.\\
The curving and the fake curvature are
\begin{eqnarray}
B_a  & = & \tilde B_a + F_{\mathcal E,\alpha} P_{\zeta_i} - A_{\mathcal E,\alpha} \wedge P_{\zeta_i}(\tilde {\mathcal A}_i + \tilde {\mathcal A}_i^\dagger) \\
F_a & = & (F_{\mathcal S,i} + F_{\mathcal E,\alpha}) P_{\zeta_i} - (A_{\mathcal S,i} + A_{\mathcal E,\alpha}) \wedge (\mathcal A_a + \mathcal A_a^\dagger) - B_a
\end{eqnarray}
where $F_{\mathcal S,i} = dA_{\mathcal S,i} \in \Omega^2 M$ is the adiabatic curvature of the isolated system, $F_{\mathcal E,\alpha} = dA_{\mathcal E,\alpha} \in \Omega^2M$ is the adiabatic curvature of the isolated environment, and $\tilde B_a = d\tilde {\mathcal A}_i - \tilde {\mathcal A}_i \wedge \tilde {\mathcal A}_i = - |d\zeta_i \rangle \wedge \langle d\zeta_i| + A_{\mathcal S,i} \wedge \mathcal A_a \in \Omega^2(M,\mathfrak s)$ is the curving associated with the $C^*$-geometric phase generator of the system without environment. These expressions seem to contain complicated terms, but in fact the averages are very simple:
\begin{eqnarray}
\tr_\S (\rho_a F_a) & = & F_{\mathcal S,i} \\
\tr_\E (\rho_a B_a) & = & F_{\mathcal E,\alpha}
\end{eqnarray}
The average fake curvature is the curvature of the isolated system, in accordance with their common interpretation. Because of the system and the environment are not entangled, if the universe is in the state $\phi_a$, the non-adiabatic processes are the same for the system in contact with the environment and for the isolated system.\\
The average curving is the curvature of the environment. It is this curvature which measures the kinematic decoherence processes. The explanation of this fact is the following. We assume an adiabatic approximation for the system in contact with the environment, but we do not assume that the dynamics of the environment is adiabatic. Indeed if we assume a total adiabaticity (system and environment), the evolution of the universe is ``strongly'' adiabatic and is characterized by the universe geometric phase generator $\llangle \phi_a|d\phi_a \rrangle$. This is not in accordance with the problem of quantum control of a system in contact with an environment. We directly control only the system (and we can only assume the adiabaticity for the system). The environment feels the control, but the ``experimentalist'' does not know the environment and its dynamics. This is the sense of the partial trace $\tr_\E$, the information concerning the environment is lost. The dynamics of the universe under the control is then ``weakly'' adiabatic and is characterized by the $C^*$-geometric phase generator $|d\phi_a \rrangle \llangle \phi_a| \rho_a^{-1}$. If the universe is in the state $\zeta_i \otimes \xi_\alpha$, and if the control path $\mathcal C$ passes through a region of $M$ with a strong curvature of the environment, then non-adiabatic transitions occur from $\xi_\alpha$ to another state $\xi_\beta$. But $\zeta_i \otimes \xi_\beta$ is not necessarily an eigenvector of the universe (we have not suppose that all eigenstates of the universe are factorizable). $\zeta_i \otimes \xi_\beta$ can be a superposition of eigenstates of the universe, and the dynamics will induce Rabi oscillations between these states. These oscillations will destroy the factorization, and the system and the environment will become entangled.

\subsubsection{Eigenstate as Schmidt decomposition:}
\label{schmidt}
We suppose that an eigenvector of the universe has a Schmidt decomposition:
\begin{equation}
\phi_a(x) = \sum_{i \in I} \sqrt{p_i} \zeta_i(x) \otimes \xi_i(x) \qquad \sum_{i\in I} p_i = 1
\end{equation}
where $I$ is a subset of $\{1,...,\dim \S\}$ and $0<p_i<1$ are occupation probabilities independent of $x$. $\{\zeta_i\}_i$ and $\{\xi_i \}_i$ are eigenvectors of the system and of the environment (associated with non-degenerate eigenvalues). The density eigenmatrix is $\rho_a = \sum_{i \in I} p_i |\zeta_i \rangle \langle \zeta_i|$ and $P_a = \sum_{i \in I} P_{\zeta_i} \otimes P_{\xi_i}$. The von Neuman entropy is the same on the whole of $M$, $- \tr_\S(\rho_a \ln \rho_a) = -\sum_{i \in I} p_i \ln p_i$.\\
The gauge potential and the reduced potential are
\begin{eqnarray}
\mathcal A_a & = & \sum_{i\in I} \tilde A_i + \sum_{i,j\in I} \hat A_{\mathcal E,I,ij} |\zeta_i \rangle \langle \zeta_j| \\
A_a & = & \sum_{i \in I} (A_{\mathcal S,I,ii} + A_{\mathcal E,I,ii}) P_{\zeta_i}
\end{eqnarray}
where $\tilde {\mathcal A_i} = |d\zeta_i \rangle \langle \zeta_i| \in \Omega^1(M,\mathfrak s)$ is the $C^*$-geometric phase generator for the system without environment, $A_{\mathcal S,I} \in \Omega^1(M,\mathfrak u(n))$ ($A_{\mathcal S,I,ij} = \langle \zeta_i|d\zeta_j \rangle$ and $n = \card I$) is the non-abelian geometric phase generator for the isolated system, $A_{\mathcal E,I} \in \Omega^1(M,\mathfrak u(n))$ ($A_{\mathcal E,I,ij} = \langle \xi_i|d\xi_j \rangle$) is the non-abelian geometric phase generator for the isolated environment, and $\hat A_{\mathcal E,I} = \varpi^{-1} A_{\mathcal E,I} \varpi$ with $\varpi_{ij} = \sqrt{p_i} \delta_{ij}$.\\
The average fake curvature and the average curving are
\begin{eqnarray}
\tr_\S (\rho_a F_a) & = & \tr_{\mathbb C^n} \left( \varpi^2 ( F_{\mathcal S,I} + [A_{\mathcal S,I},\hat A_{\mathcal E,I}^t]) \right) \\
\tr_\S (\rho_a B_a) & = & \tr_{\mathbb C^n} \left(\varpi^2 (F_{\mathcal E,I} - A_{\mathcal S,I} \wedge A_{\mathcal S,I}) \right)
\end{eqnarray}
where $F_{\mathcal S,I} = dA_{\mathcal S,I} + A_{\mathcal S,I} \wedge A_{\mathcal S,I} \in \Omega^2(M,\mathfrak u(n))$ is the non-abelian curvature of the isolated system and $F_{\mathcal E,I} = dA_{\mathcal E,I} + A_{\mathcal E,I} \wedge A_{\mathcal E,I} \in \Omega^2(M,\mathfrak u(n))$ is the non-abelian curvature of the isolated environment ($A^t$ denotes the transposition of the matrix $A$).\\
The average fake curvature is then essentially the statistical average of the non-abelian curvature of the isolated system. In a same way, the average curving is essentially the statistical average of the non-abelian curvature of the isolated environment. The interpretations are then the same that for the previous example, but with an average associated with the superposition of factorized states of the Schmidt decomposition. The additional term in the average curving $\tr(\varpi^2 A_{\mathcal S,I} \wedge A_{\mathcal S,I}) = \sum_{i,j \in I} p_i A_{\mathcal S,I,ij} \wedge A_{\mathcal S,I,ji}$ characterizes the non-adiabatic transitions of the system inner the space spanned by $\{\zeta_i\}_{i \in I}$. These transitions can modify the superposition coefficients and induce Rabi oscillations between $\phi_a$ and others eigenvectors of the universe (and induce modifications of the entanglement). The additional term in the average fake curvature $\tr(\varpi^2 [A_{\mathcal S,I},\hat A_{\mathcal E,I}^t])$ characterizes non-adiabatic transitions for the system induced by its entanglement with the environment.\\
Remark: If the probabilities $p_i$ depend on $x$, we have a new gauge potential $\breve {\mathcal A}_a = \mathcal A_a + \sum_i d\ln \sqrt{p_i} P_{\zeta_i} = \mathcal A_a + \sum_i (d\varpi \varpi^{-1})_{ii} P_{\zeta_i}$. $\varpi$ plays the role of an usual gauge change and the results for the average fake curvature and for the average curving are similar to the case where the probabilities are independent of $x$.

\section{Example: STIRAP}
In this section, we illustrate the role of the fields associated with the higher gauge theory with a concrete example. We want also show that a geometric representation of the field strengths can be used to interpret the hampering of the quantum control induced by the entanglement of the system with its environment. Then it can be used to analyse the robustness of a control solution found by considering solely the system. We have chosen a very simple quantum control problem in order to avoid unnecessary complications which could hide the fundamental behaviours of the universe.\\
We consider a three level atom controlled by two laser Gaussian pulses. The first one, called ``pump'' pulse, is quasi-resonant with the transition $|1 \rangle \to |2 \rangle$; and the second one, called ``Stokes'' pulse, is quasi-resonant with the transition $|2 \rangle \to |3 \rangle$ ($(|i \rangle)_{i=1,2,3}$ being the atomic bare states). A second three level atom interacts with the first one and feels the laser pulses but with attenuated intensities (we set the attenuation as being a division by a factor $2$). This second atom constitutes the environment for the controlled atom. In the rotating wave approximation (see \cite{Guerin}), the Hamiltonian of the universe is
\begin{equation}
H_{\mathcal U}(x) = H_{\mathcal S}(x) \otimes 1_\E + 1_\S \otimes H_{\mathcal E}(x) + V(x)
\end{equation}
with in the basis $(|i\rangle)_{i=1,2,3}$ for the first atom
\begin{equation}
H_{\mathcal S}(x) = \frac{\hbar}{2} \left( \begin{array}{ccc} 0 & \Omega_P & 0 \\ \Omega_P & 2 \Delta_P & \Omega_S \\ 0 & \Omega_S & 2(\Delta_P-\Delta_S) \end{array} \right)
\end{equation}
and in the basis $(|i\rangle)_{i=1,2,3}$ for the second atom
\begin{equation}
H_{\mathcal E}(x) = \frac{\hbar}{2} \left( \begin{array}{ccc} 0 & \frac{1}{2} \Omega_P & 0 \\  \frac{1}{2} \Omega_P & 2 \Delta_P &  \frac{1}{2} \Omega_S \\ 0 &  \frac{1}{2} \Omega_S & 2(\Delta_P-\Delta_S) \end{array} \right)
\end{equation}
$\Omega_P = |\langle 1| \vec \mu \cdot \vec {\mathcal E}_P|2 \rangle|$ where $\vec \mu$ is the electric dipole moment of an atom and $\vec {\mathcal E}_P$ is the electric field of the pump pulse, $\Omega_S = |\langle 2| \vec \mu \cdot \vec {\mathcal E}_S|3 \rangle|$ where $\vec {\mathcal E}_S$ is the electric field of the Stokes pulse, $\hbar \Delta_P = (\epsilon_2 - \epsilon_1) - \hbar \omega_P$ and $\hbar \Delta_S = (\epsilon_3- \epsilon_2) - \hbar \omega_S$ where $(\epsilon_i)_{i=1,2,3}$ are the atomic bare energies and $\omega_P$ and $\omega_S$ are the laser frequencies. The laser frequencies (and then the detuning $\Delta_{P/S}$) are fixed; and $x = (\Omega_P,\Omega_S)$ constitute the control parameters. The control manifold is $M = \mathbb R^+ \times \mathbb R^+$.\\
For the sake of simplicity, we choose a simple operator $V$ to model the coupling between the two atoms. We consider two cases:
\begin{itemize}
\item a static coupling:
\begin{equation}
V = g \left( |2,3 \rrangle \llangle 3,2| + |3,2 \rrangle \llangle 2,3| \right)
\end{equation}
where $|i,j \rrangle = |i \rangle \otimes |j \rangle \in \SE$;
\item a dynamical coupling:
\begin{eqnarray}
V(x) & = & g \left( |\zeta_2(x) \otimes \xi_3(x) \rrangle \llangle \zeta_3(x) \otimes \xi_2(x)| \right. \nonumber \\
& & \quad \left. + |\zeta_3(x) \otimes \xi_2(x) \rrangle \llangle \zeta_2(x) \otimes \xi_3(x)| \right)
\end{eqnarray}
where $\{\zeta_i(x)\}_{i=1,2,3}$ are the eigenvectors of $H_{\mathcal S}(x)$ (continuous with respect to $x$) such that $|\zeta_i(0) \rangle = |i \rangle$; and $\{\xi_i(x)\}_{i=1,2,3}$ are the eigenvectors of $H_{\mathcal E}(x)$ (continuous with respect to $x$) such that $|\xi_i(0) \rangle = |i \rangle$ ($x=0$ is the point corresponding to off lasers, $\Omega_S = \Omega_P = 0$).
\end{itemize}
$g \geq 0$ is the coupling strength. We consider only perturbative couplings between the system and the environment ($g \ll 1$).\\

Let $\{\phi_a(x)\}_{a=1,...,9}$ be the eigenvectors of $H_{\mathcal U}(x)$, continuous with respect to $x$ and such that
\begin{eqnarray}
& & \lim_{g \to 0} \phi_1 = \zeta_1 \otimes \xi_1 \qquad \lim_{g \to 0} \phi_2 = \zeta_2 \otimes \xi_1 \qquad \lim_{g \to 0} \phi_3 = \zeta_3 \otimes \xi_1 \\
& & \lim_{g \to 0} \phi_4 = \zeta_1 \otimes \xi_2 \qquad \lim_{g \to 0} \phi_5 = \zeta_2 \otimes \xi_2 \qquad \lim_{g \to 0} \phi_6 = \zeta_3 \otimes \xi_2 \\
& & \lim_{g \to 0} \phi_7 = \zeta_1 \otimes \xi_3 \qquad \lim_{g \to 0} \phi_8 = \zeta_2 \otimes \xi_3 \qquad \lim_{g \to 0} \phi_9 = \zeta_3 \otimes \xi_3
\end{eqnarray}

The quantum control problem consists to reach the pure target state $\rho_{target} = |3 \rangle \langle 3|$ with the system initially in the pure state $\rho_0 = |1 \rangle \langle 1|$. In a first time, we recall the classical solution of the problem (the STIRAP solution) when the controlled atom is alone. In a second time, we study the robustness of this solution where the controlled atom is in contact with the second one, without coupling, with the static coupling and with the dynamical coupling.\\
The results of the control are computed by numerical integrations of the Schr\"odinger equation of the universe based on a second order differential scheme (see for example \cite{Leforestier}). The different field strengths ($F_{\mathcal S}$, $\tr_\S(\rho_a F_a)$, $\tr_\S(\rho_a B_a)$) are numerically computed by using methods issued from lattice gauge theory \cite{Kogut, Larsson} after a triangulation of the control manifold $M$ with a sufficiently thin triangular lattice.

\subsection{A single isolated atom}
The STIRAP solution \cite{Guerin} consists in the path $\mathcal C$ on $M$ defined by
\begin{equation}
t \mapsto x(t) = \left( \Omega_0 e^{-\frac{(t-t_P)^2}{\tau_P^2}}, \Omega_0 e^{-\frac{(t-t_S)^{2}}{\tau_S^2}} \right)
\end{equation}
with $\Omega_0 = 3.5\ ua$, $t_P=70\ ua$, $t_S= 20\ ua$, $\tau_P=\tau_S = 30\ ua$ (with $t_0 = -60\ ua$ and $T=140\ ua$) ($ua$= atomic units). This solution is counter-intuitive since it consists to start the Stokes pulse (which is quasi-resonant with the transition $|2 \rangle \to |3 \rangle$) before the pump pulse (which is quasi-resonant with the transition $|1 \rangle \to |2 \rangle$). To enlighten this control solution, we numerically integrate the Schr\"odinger equation for the system alone, and we consider the density matrix $\rho_{\psi}(t) = |\psi(t) \rangle \langle \psi(t)|$ with $\psi(t)$ the solution of the Schr\"odinger equation. The occupation probabilities of the bare states $\rho_{\psi,ii}(t) = \langle i|\rho_{\psi}(t)|i \rangle$ and the occupation probabilities of the instantaneous eigenstates $\tr_\S(\rho_\psi(t) P_i(x(t)))$ with $P_i(x) = |\zeta_i(x) \rangle \langle \zeta_i(x)|$ are shown figure \ref{occ_prob_alone}.
\begin{figure}
\begin{center}
\includegraphics[width=12.8cm]{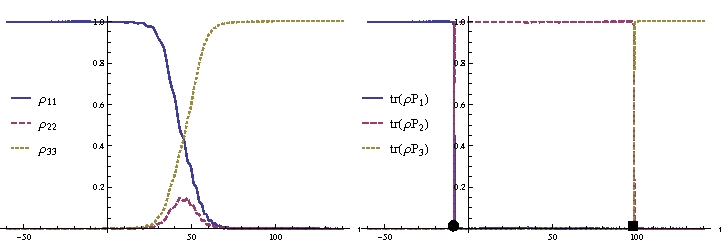}
\caption{\label{occ_prob_alone} Left: occupation probabilities of the bare states $\rho_{\psi,ii}$ with respect to $t$. Right: occupation probabilities of the instantaneous eigenstates $\tr_\S(\rho_\psi P_i)$ with respect to $t$. The event $\bullet$ is the transition $\zeta_1 \to \zeta_2$ and the event $\blacksquare$ is the transition $\zeta_2 \to \zeta_3$.}
\end{center}
\end{figure}
We can interpret the solution by using the adiabatic curvatures $F_{\mathcal S,i} = dA_{\mathcal S,i}$ with $A_{\mathcal S,i} = \langle \zeta_i|d\zeta_i \rangle$. Figure \ref{fields_alone} shows the densities of the field strengths $F_{\mathcal S,i,12} = \langle \frac{\partial \zeta_i}{\partial x^1} | \wedge | \frac{\partial \zeta_i}{\partial x^2} \rangle$.
\begin{figure}
\begin{center}
\includegraphics[width=15.5cm]{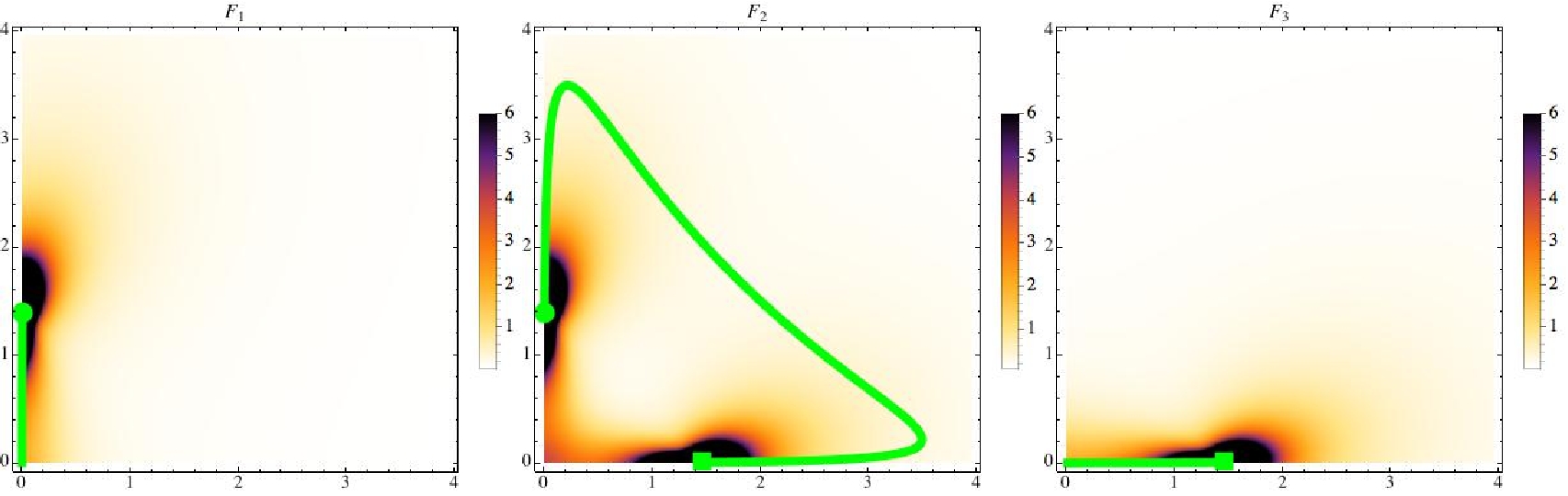} 
\caption{\label{fields_alone} Densities of the adiabatic curvatures $F_{\mathcal S,i}$ in $M$, with the path $\mathcal C$ and the events $\bullet$ (transition $\zeta_1 \to \zeta_2$) and $\blacksquare$ (transition $\zeta_2 \to \zeta_3$).}
\end{center}
\end{figure}
Starting from $P_1$, the dynamics passes to $P_2$ at $\bullet$ which corresponds to the singularity common to $F_{\mathcal S,1}$ and $F_{\mathcal S,2}$ (the crossing of the two associated eigenvalues). The dynamics passes from $P_2$ to $P_3$ at $\blacksquare$ the singularity common to $F_{\mathcal S,2}$ and $F_{\mathcal S,3}$. This is in accordance with the adiabatic quantum control method based on rapid adiabatic passages (see also \cite{Guerin}).\\

In the sequel, we do not change the path $\mathcal C$ (the STIRAP solution of the control problem), but we study the robustness of this solution with the entanglement of the controlled atom with the atom constituting the environment.

\subsection{Two atoms without coupling}
We begin by studying the case without coupling between the two atoms ($g=0$). The eigenvectors $\phi_a$ are then factorizable, and the rank of the density eigenmatrices $\rho_a = \tr_\E |\phi_a \rrangle \llangle \phi_a|$ is equal to one. The von Neuman entropy $- \tr_\S (\rho_a \ln \rho_a)$ is then zero for all states.\\
We integrate the Schr\"odinger equation of the universe with two initial conditions, the first one is without state superposition $\psi(t_0) = \phi_1(0)$ and the second one with initial environment state superposition $\psi(t_0) = \frac{1}{\sqrt{2}} (\phi_1(0) + \phi_7(0))$. In the two cases we have $\rho_\psi(t_0) = \tr_\S |\psi(0) \rrangle \llangle \psi(0)| = |1 \rangle \langle 1|$. The occupation probabilities of the bare states $\rho_{\psi,ii}$, the occupation probabilities of the instantaneous eigenvectors of the system $\tr_\S (\rho_\psi P_i)$ ($P_i = |\zeta_i \rangle \langle \zeta_i|$) and the occupation probabilities of the eigenvectors of the universe $|\llangle \phi_a|\psi \rrangle|^2$ are shown figure \ref{occ_prob_zero}.
\begin{figure}
\begin{center}
\includegraphics[width=12.8cm]{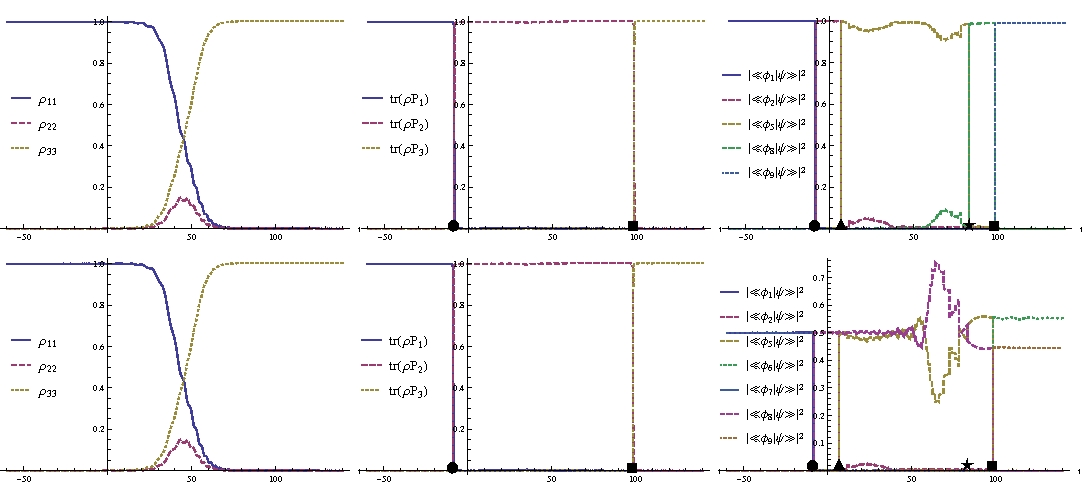}
\caption{\label{occ_prob_zero} Left: occupation probabilities of the bare states $\rho_{\psi,ii}$ with respect to $t$. Middle: occupation probabilities of the instantenous eigenstates of the system $\tr_\S(\rho_\psi P_i)$ with respect to $t$. Right: occupation probabilities of the instantenous eigenstates of the universe $|\llangle \phi_a|\psi \rrangle|^2$ with respect to $t$. Up: for the initial condition $\psi(t_0) = \phi_1(0)$. Down: for the initial condition $\psi(t_0) = \frac{1}{\sqrt{2}} (\phi_1(0) + \phi_7(0))$. The event $\bullet$ is the transition $\zeta_1 \to \zeta_2$ ($\phi_1 \to \phi_2$) and the event $\blacksquare$ is the transition $\zeta_2 \to \zeta_3$ ($\phi_5 \to \phi_9$). The event $\blacktriangle$ is the transition $\phi_2 \to \phi_5$ and the event $\star$ is the transition $\phi_5 \to \phi_8$.}
\end{center}
\end{figure}
In the two cases, we see that the adiabaticity is satisfied for the system whereas this is not completely the case for the universe. But since all states of the universe are factorizable, no decoherence significantly occurs on the controlled dynamics (for the first case the entropy remains equal to zero and it increases to a very small value in the second case as shown figure \ref{entropy_zero}).
\begin{figure}
\begin{center}
\includegraphics[width=6.4cm]{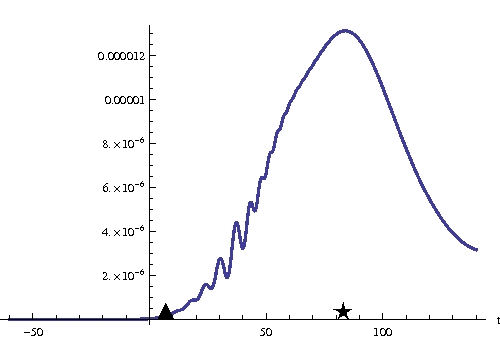}
\caption{\label{entropy_zero} von Neuman entropy of the density matrix $- \tr_\S (\rho_\psi \ln \rho_\psi)$ with respect to $t$ for the initial condition $\psi(0) = \frac{1}{\sqrt{2}} (\phi_1(0) + \phi_7(0))$.}
\end{center}
\end{figure}
We can interpret the results by drawing the densities in $M$ of the average adiabatic fake curvatures $\tr_\S(\rho_a F_a)$ (figure \ref{fake_zero}) and of average adiabatic curving $\tr_\S(\rho_a B_a)$ (figure \ref{curving_zero}).
\begin{figure}
\begin{center}
\includegraphics[width=15.5cm]{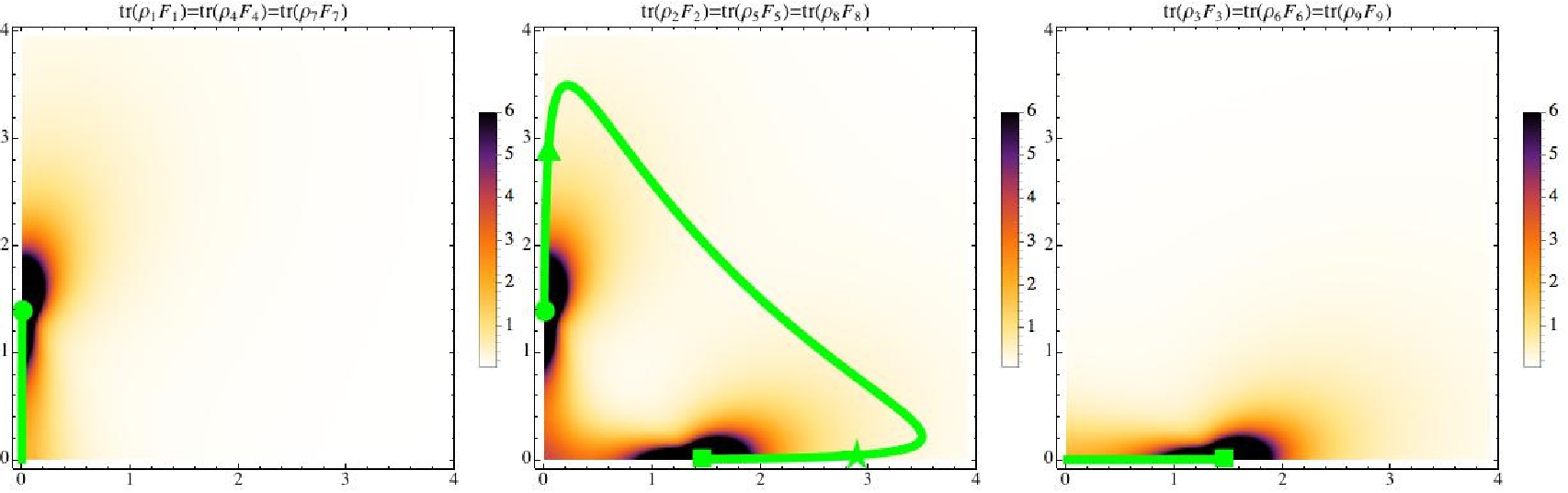}
\caption{\label{fake_zero} Densities of the average fake curvature $\tr_\S(\rho_a F_a)$ in $M$, with the path $\mathcal C$ and the events $\bullet$ (transition $\zeta_1 \to \zeta_2$), $\blacktriangle$ (transition $\phi_2 \to \phi_5$), $\star$ (transition $\phi_5 \to \phi_8$) and $\blacksquare$ (transition $\zeta_2 \to \zeta_3$).}
\end{center}
\end{figure}
\begin{figure}
\begin{center}
\includegraphics[width=15.5cm]{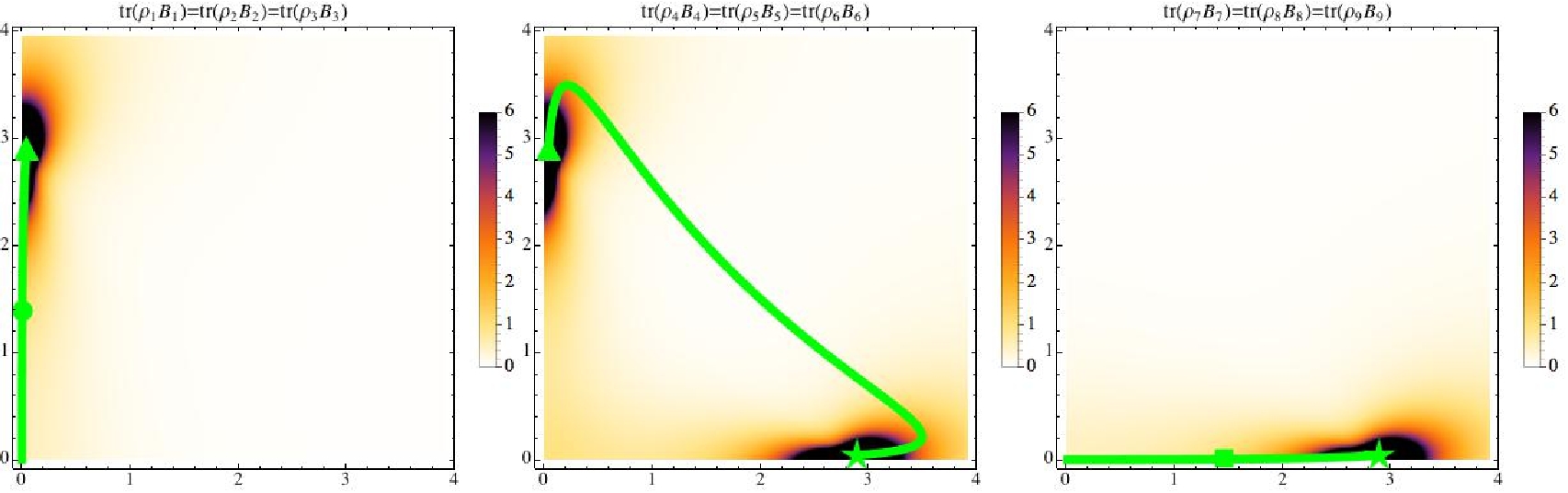}
\caption{\label{curving_zero} Densities of the average curving $\tr_\S(\rho_a B_a)$ in $M$, with the path $\mathcal C$ and the events $\bullet$ (transition $\zeta_1 \to \zeta_2$), $\blacktriangle$ (transition $\phi_2 \to \phi_5$), $\star$ (transition $\phi_5 \to \phi_8$) and $\blacksquare$ (transition $\zeta_2 \to \zeta_3$).}
\end{center}
\end{figure}
As shown in section \ref{factorizable} the average fake curvature is equal to the curvature of the system and the average curving is equal to the curvature of the environment. To summarize we have:
\begin{itemize}
\item[$\bullet$] the path $\mathcal C$ passes through a singularity of $\tr_\S(\rho_1 F_1)$ and $\tr_\S(\rho_2 F_2)$ which induces a transition $P_1 \to P_2$ for the system (and a transition $\phi_1 \to \phi_2$ for the universe);
\item[$\blacktriangle$] $\mathcal C$ passes through a singularity of $\tr_\S(\rho_2 B_2)$ and $\tr_\S(\rho_5 B_5)$ which induces a transition $\phi_2 \to \phi_5$ (without non-adiabatic effect for the system);
\item[$\star$] $\mathcal C$ passes through a singularity of $\tr_\S(\rho_5 B_5)$ and $\tr_S(\rho_8 B_8)$ which induces a transition $\phi_5 \to \phi_8$ (without non-adiabatic effect for the system);
\item[$\blacksquare$] $\mathcal C$ passes through a singularity of $\tr_\S(\rho_8 F_8)$ and $\tr_\S(\rho_9 F_9)$ which induces a transition $P_2 \to P_3$ for the system (and a transition $\phi_8 \to \phi_9$ for the universe).
\end{itemize}
The passages by the singularities of the curving have not significant consequences on the control in this case (except the very small increase of the entropy associated with the non-adiabatic transitions in environment for the case with a superposition of eigenstates).

\subsection{Two atoms with a static coupling}
We remake the previous study with a static coupling ($g=0.1\ ua$). In this case, all the density eigenmatrices $\rho_a$ are invertible (their rank is equal to $3$) for non zero laser intensities. We consider three initial conditions $\psi(t_0) = \phi_1(0)$, $\psi(t_0) = \phi_4(0)$ and $\psi(t_0) = \phi_7(0)$. The occupation probabilities are drawn figures \ref{occ_prob_static1} and the entropies of the density matrix are drawn figure \ref{entropy_static}.
\begin{figure}
\begin{center}
\includegraphics[width=12.8cm]{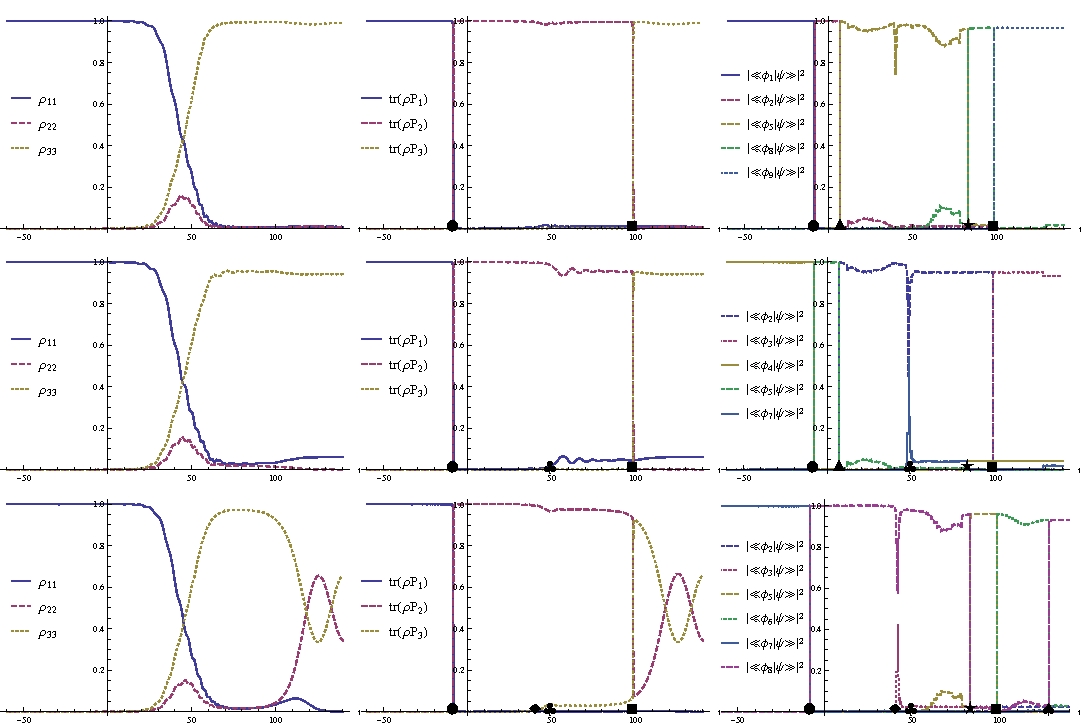}
\caption{\label{occ_prob_static1} Left: Occupation probabilities of the bare states $\rho_{\psi,ii}$ with respect to $t$. Middle: occupation probabilities of the instantenous eigenstates of the system $\tr_\S(\rho_\psi P_i)$ with respect to $t$. Right: occupation probabilities of the instantenous eigenstates of the universe $|\llangle \phi_a|\psi \rrangle|^2$ with respect to $t$. Up: for the initial condition $\psi(t_0) = \phi_1(0)$. Middle: for the initial condition $\psi(t_0) = \phi_4(0)$. Down: for the initial condition $\psi(t_0) = \phi_7(0)$. The event $\bullet$ is the transition $\zeta_1 \to \zeta_2$ and the event $\blacksquare$ is the transition $\zeta_2 \to \zeta_3$. The event $\blacktriangle$ is the transition $\phi_2 \leftrightarrow \phi_5$, the event $\star$ is the transition $\phi_5 \leftrightarrow \phi_8$, and the event $\spadesuit$ is the transition $\phi_6 \to \phi_8$. The events $\blacklozenge$ and $\clubsuit$ are the beginnings of non-adiabatic exchanges in the universe.}
\end{center}
\end{figure}

\begin{figure}
\begin{center}
\includegraphics[width=12.8cm]{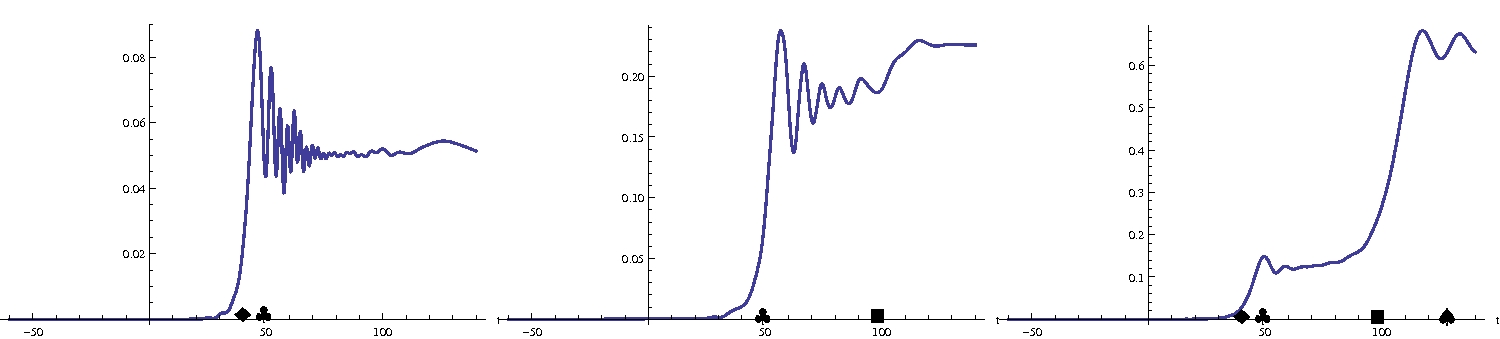}
\caption{\label{entropy_static} von Neuman entropy of the density matrix $- \tr_\S (\rho_\psi \ln \rho_\psi)$ with respect to $t$ for the initial conditions $\psi(t_0) = \phi_1(0)$, $\psi(t_0) = \phi_4(0)$ and $\psi(t_0) = \phi_7(0)$.}
\end{center}
\end{figure}
In the first case the control is unperfectly realized, in the second case the control quality is very small, and in the last case the control completely fails. We want to interpret geometrically the different dynamical effects appearing in the previous figures. We first note that the average curving is zero for all states. Figure \ref{fake_static} shows the average fake curvatures and figure \ref{geoentropy_static} shows the eigenentropies.
\begin{figure}
\begin{center}
\includegraphics[width=15.5cm]{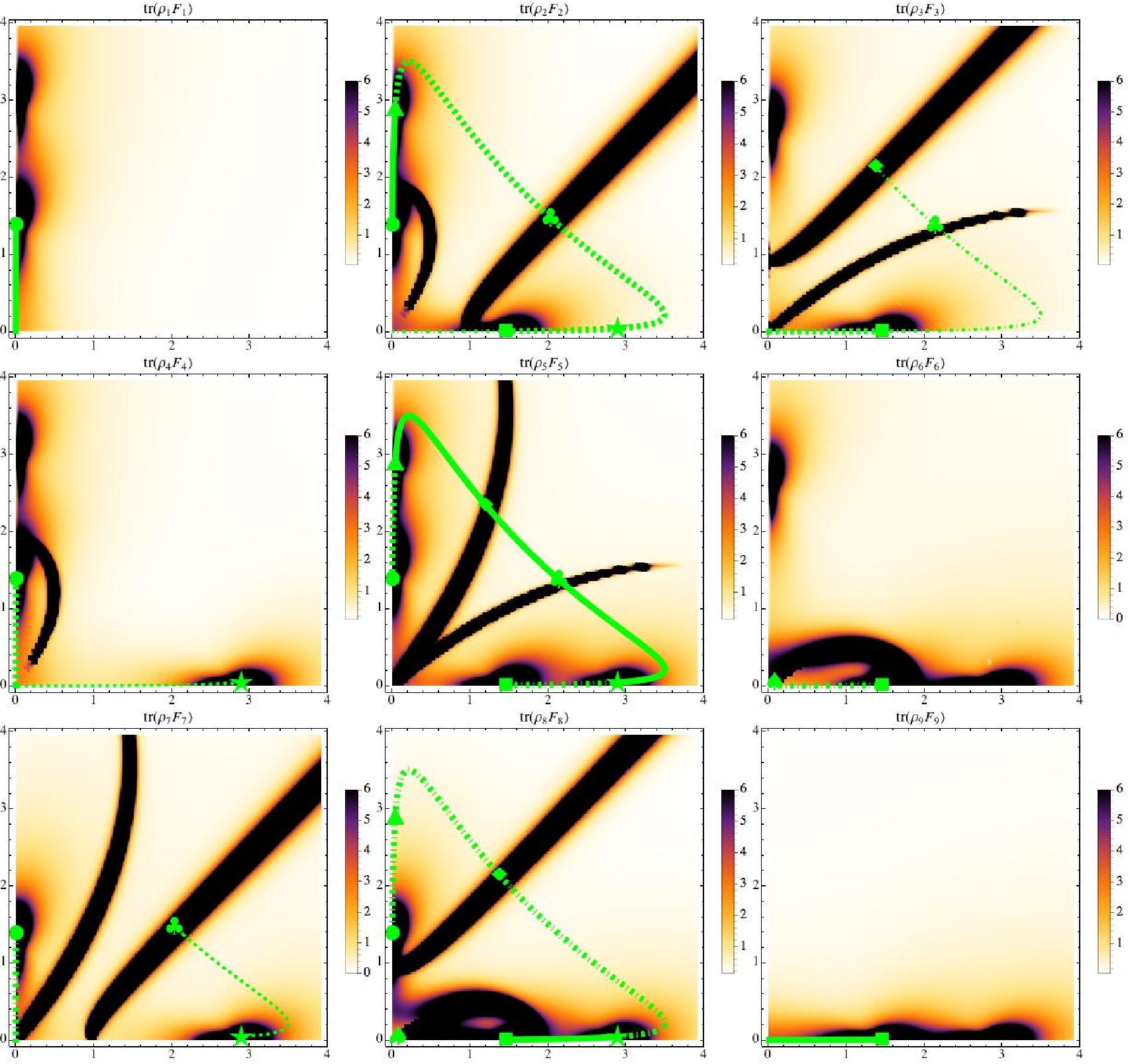}
\caption{\label{fake_static} Densities of the average fake curvature $\tr_\S(\rho_a F_a)$ in $M$, with the path $\mathcal C$ (plain line for the case 1, dashed line for the case 2, dotted line for the case 3; large line for the states with a large occupation and thin line for the states with a small occupation). We show the events $\bullet$ (transition $\zeta_1 \to \zeta_2$), $\blacktriangle$ (transition $\phi_2 \leftrightarrow \phi_5$), $\blacklozenge$ and $\clubsuit$ (beginnings of non-adiabatic exchanges), $\star$ (transition $\phi_5 \leftrightarrow \phi_8$), $\blacksquare$ (transition $\zeta_2 \to \zeta_3$), and $\spadesuit$ (transition $\phi_6 \to \phi_8$).}
\end{center}
\end{figure}
\begin{figure}
\begin{center}
\includegraphics[width=15.5cm]{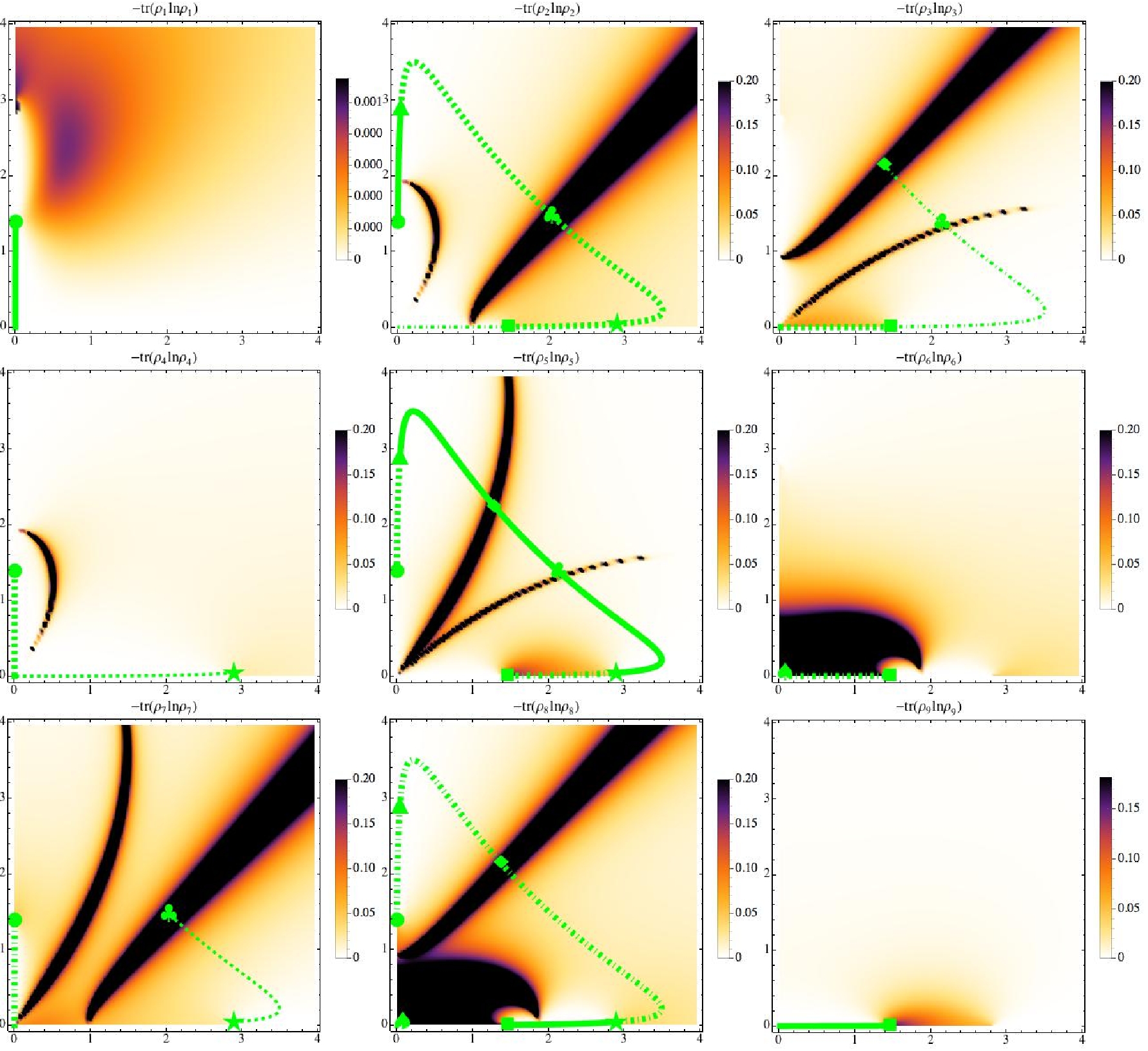}
\caption{\label{geoentropy_static} Eigenentropies $-\tr_\S(\rho_a \ln \rho_a)$ in $M$, with the path $\mathcal C$ (plain line for the case 1, dashed line for the case 2, dotted line for the case 3; large line for the states with a large occupation and thin line for the states with a small occupation). We show the events $\bullet$ (transition $\zeta_1 \to \zeta_2$), $\blacktriangle$ (transition $\phi_2 \leftrightarrow \phi_5$), $\blacklozenge$ and $\clubsuit$ (beginnings of the increase of the entropy of the system), $\star$ (transition $\phi_5 \leftrightarrow \phi_8$), $\blacksquare$ (transition $\zeta_2 \to \zeta_3$), and $\spadesuit$ (transition $\phi_6 \to \phi_8$).}
\end{center}
\end{figure}
We see on these figures that the regions with strong eigenentropies are correlated with some regions with strong fake curvature (which have not the morphology of a point singularity). Because of their morphology, we call such a region an \textit{entropic string}. We summarize the different events:
\begin{itemize}
\item[$\bullet$] the path $\mathcal C$ passes through a singularity of $\tr_\S (\rho_a F_a)$ and $\tr_S (\rho_{a+1} F_{a+1})$ (with $a=1$ for the firts case, $a=4$ for the second one, and $a=7$ for the last case) which induces a transition $P_1 \to P_2$ for the system;
\item[$\blacktriangle$] (not for the last case) $\mathcal C$ passes through a singularity of $\tr_\S (\rho_2 F_2)$ and $\tr_S (\rho_5 F_5)$ which induces a transition $\phi_2 \leftrightarrow \phi_5$ (without non-adiabatic effects for the system);
\item[$\blacklozenge$] (not for the second case) $\mathcal C$ passes through an entropic string (of $\tr_\S (\rho_5 F_5)$ for the first case, and of $\tr_\S(\rho_3 F_3)$ and $\tr_S(\rho_8 F_8)$ for the last case) which induces the increase of the entropy of the system (and non-adiabatic exchanges between $\phi_3$ and $\phi_8$ for the last case);
\item[$\clubsuit$] $\mathcal C$ passes through an entropic string (of $\tr_\S (\rho_5 F_5)$ for the first case, of $\tr_\S (\rho_2 F_2)$ and $\tr_\S (\rho_7 F_7)$ for the second case, and of $\tr_\S (\rho_3 F_3)$ and $\tr_\S(\rho_5 F_5)$ for the last case) which induces oscillations of the entropy of the system (with non-adiabatic exchanges);
\item[$\star$] (not for the second case) $\mathcal C$ passes through a singularity of $\tr_\S (\rho_5 F_5)$ and $\tr_\S (\rho_8 F_8)$ which induces a transition $\phi_5 \leftrightarrow \phi_8$;
\item[$\blacksquare$] $\mathcal C$ passes through a singularity of $\tr_\S (\rho_8 F_8)$ and $\tr_\S(\rho_9 F_9)$ (first case), or of $\tr_\S (\rho_2 F_2)$ and $\tr_\S(\rho_3 F_3)$ (second case), or of $\tr_\S (\rho_5 F_5)$ and $\tr_\S(\rho_6 F_6)$ (last case), which induces a transition $P_2 \to P_3$ for the system; for the last case only $\mathcal C$ enters in a strong entropic string of $\tr_\S (\rho_6 F_6)$ and $\tr_\S(\rho_8 F_8)$ which induces strong non-adiabatic exchanges.
\item[$\spadesuit$] (only for the last case) $\mathcal C$ passes through a singularity of $\tr_\S (\rho_6 F_6)$ and $\tr_\S(\rho_8 F_8)$ which induces a transition $\phi_6 \to \phi_8$.
\end{itemize}
The passages by the entropic strings ($\blacklozenge$, $\clubsuit$ and $\blacksquare$ in the last case) are responsible of the failures of the adiabatic quantum control. We see two kinds of singularities of the fake curvature, the first one ($\bullet$ and $\blacksquare$) induces adiabatic passages for the system, the second one ($\blacktriangle$, $\star$ and $\spadesuit$) induces adiabatic passages for the environment without effects on the system. The inactive and active singularities (from the viewpoint of the controlled system) seem not to be easily distinguishable without comparison with the singularities of the isolated system. The average curving being equal to zero in this case, it does not eliminate the inactive singularities as in the previous case. This is clearly a drawback of an analysis based on adiabatic fields.\\
It is possible to give an heuristic explanation of these results. Write $\rho_a(x) = \sum_{i,j} p_{ij}(x) |\zeta_i(x) \rangle \langle \zeta_j(x)|$. The exact calculations of $p_{ij}(x)$ and of the different fields can be complicated. But by analogy with section \ref{schmidt}, we can heuristically suppose that $\tr_\S (\rho_a F_a)$ and $\tr_\S(\rho_a B_a)$ have behaviours similar to $\tr_{\mathbb C^3} (\varpi^2(F_{\mathcal S} + [A_{\mathcal S},\hat A_{\mathcal E}^t]))$ and $\tr_{\mathbb C^3} (\varpi^2(F_{\mathcal E} - A_{\mathcal S} \wedge A_{\mathcal S}))$ with $(\varpi^2)_{ij} = p_{ij}$. $A_{\mathcal S} \in \Omega^1(M,\mathfrak u(3))$ is the non-abelian generator of the geometric phase for all states of the system, and we have then $F_{\mathcal S} = dA_{\mathcal S} + A_{\mathcal S} \wedge A_{\mathcal S} = 0$. By the same manner $F_{\mathcal E} = 0$. The adiabatic transitions inner to $\S$ do not induce decoherence processes, we can then expect that $\tr_{\mathbb C^3} (\varpi^2 A_{\mathcal S} \wedge A_{\mathcal S})$ and then $\tr_\S(\rho_a B_a)$ are zero. Moreover $\tr_{\mathbb C^3} (\varpi^2 [A_{\mathcal S},\hat A_{\mathcal E}^t])$ feels essentially the non-adiabatic couplings for the system ($A_{\mathcal S}$) and for the environment ($A_{\mathcal E}$). Since the coupling is perturbative we can expect that $\tr_\S (\rho_a F_a)$ presents essentially the singularities associated with the state $\lim_{g \to 0} \phi_a$ which characterizes non-adiabatic couplings for the system (active singularities) and for the environment (passive singularities). This argument is not a proof, it is just a heuristic argument showing the consistency of the numerical results with the interpretations of the adiabatic fields.

\subsection{Two atoms with a dynamical coupling}
We consider now a dynamical coupling ($g=0.1\ ua$): $V(x) = g(|\zeta_2(x)\otimes \xi_3(x)\rrangle \llangle \zeta_3(x)\otimes \xi_2(x)| + |\zeta_3(x)\otimes \xi_2(x)\rrangle \llangle \zeta_2(x)\otimes \xi_3(x)|)$, i.e. the potential couples the dressed states in place of the bare states. In this case the density eigenmatrices have a rank equal to 1 except $\rho_6$ and $\rho_8$ which have a rank equal to 2 (for non zero laser intensities). We consider two initial conditions $\psi(t_0)= \phi_1(0)$ and $\psi(t_0) = \phi_7(0)$. The occupation probabilities are drawn figure \ref{occ_prob_dynamic} and the entropies of the density matrix are drawn figure \ref{entropy_dynamic}.
\begin{figure}
\begin{center}
\includegraphics[width=12.8cm]{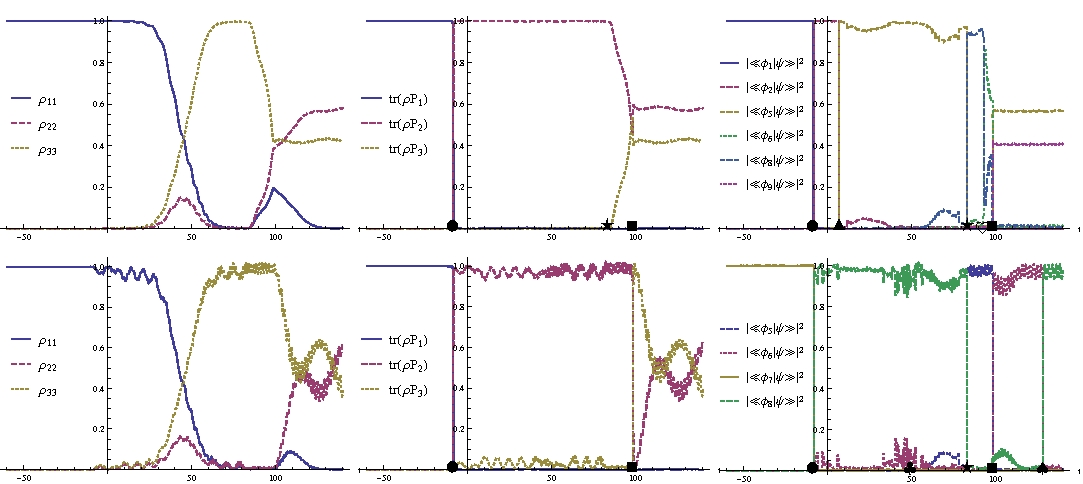}
\caption{\label{occ_prob_dynamic} Left: occupation probabilities of the bare states $\rho_{\psi,ii}$ with respect to $t$. Middle: occupation probabilities of the instantenous eigenstates of the system $\tr_\S(\rho_\psi P_i)$ with respect to $t$. Right: occupation probabilities of the instantenous eigenstates of the universe $|\llangle \phi_a|\psi \rrangle|^2$ with respect to $t$. Up: for the initial condition $\psi(t_0) = \phi_1(0)$. Down: for the initial condition $\psi(t_0) = \phi_7(0)$. The event $\bullet$ is the transition $\zeta_1 \to \zeta_2$ and the event $\blacksquare$ is the transition $\zeta_2 \to \zeta_3$. The event $\blacktriangle$ is the transition $\phi_2 \leftrightarrow \phi_5$, the event $\star$ is the transition $\phi_5 \leftrightarrow \phi_8$, and the events $\spadesuit$ and $\heartsuit$ are the transitions $\phi_6 \leftrightarrow \phi_8$. The events $\blacklozenge$ and $\clubsuit$ are the beginnings of non-adiabatic exchanges in the universe.}
\end{center}
\end{figure}
\begin{figure}
\begin{center}
\includegraphics[width=12.8cm]{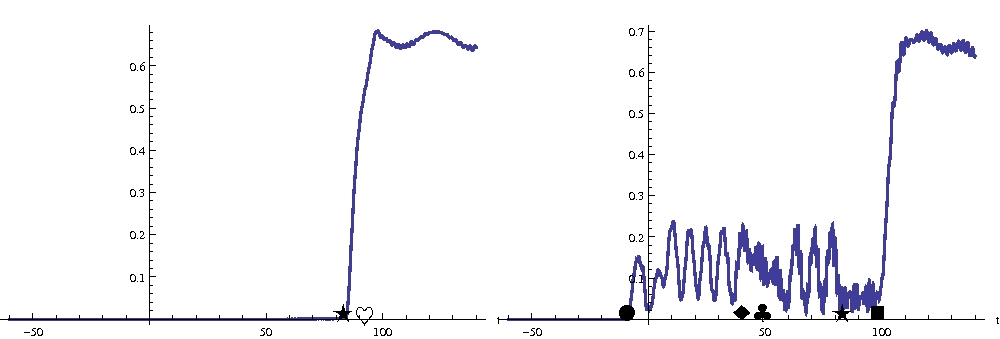}
\caption{\label{entropy_dynamic} von Neuman entropy of the density matrix $- \tr_\S (\rho_\psi \ln \rho_\psi)$ with respect to $t$ for the initial conditions $\psi(t_0) = \phi_1(0)$ (left) and $\psi(t_0) = \phi_7(0)$ (right).}
\end{center}
\end{figure}
In these two cases, the control dramatically fails. The average fake curvatures are shown figure \ref{fake_dynamic}, the average curving are shown figure \ref{curving_dynamic} and the eigenentropies are shown figure \ref{geoentropy_dynamic}.
\begin{figure}
\begin{center}
\includegraphics[width=15.5cm]{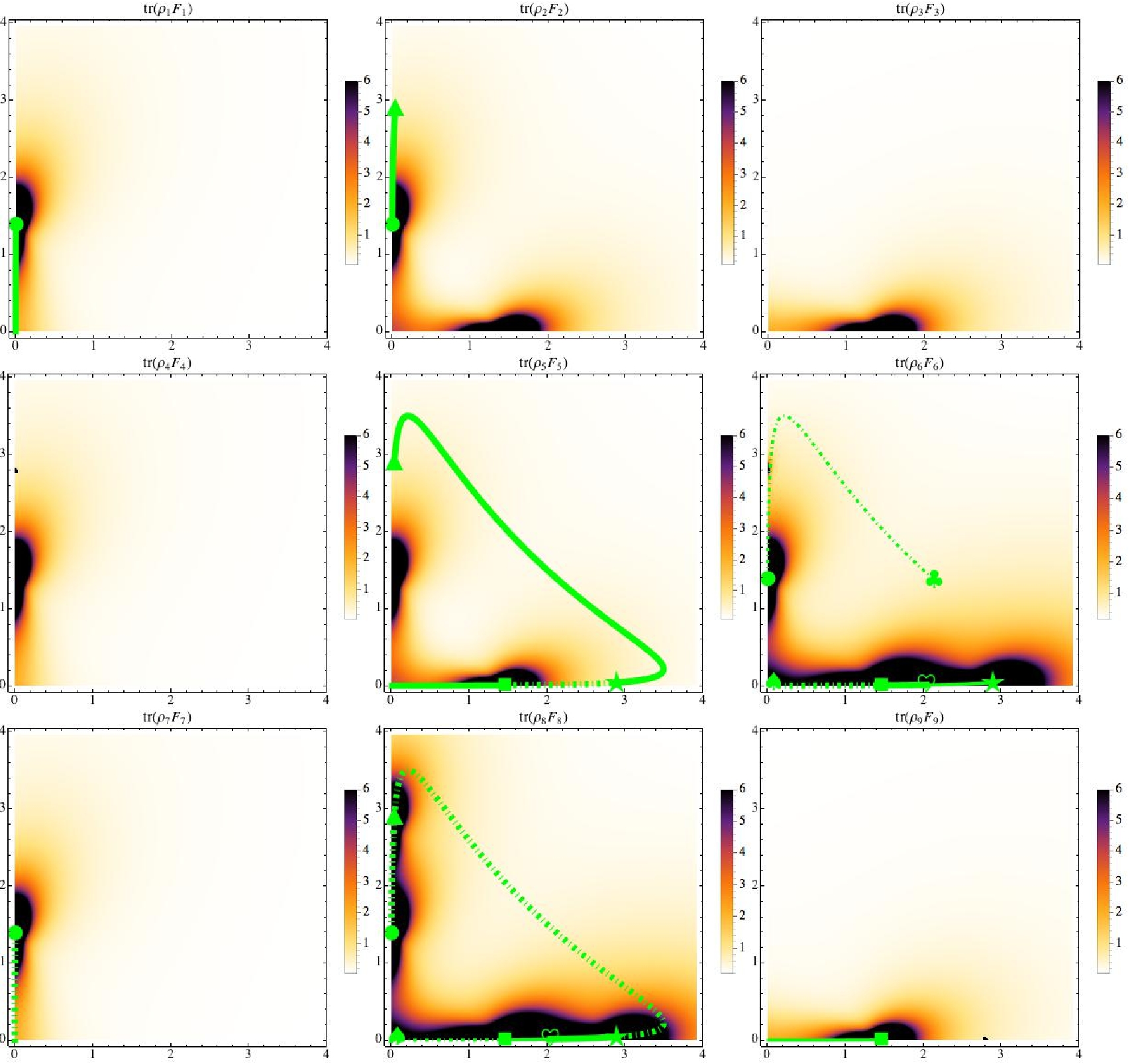}
\caption{\label{fake_dynamic} Densities of the average fake curvature $\tr_\S(\rho_a F_a)$ in $M$, with the path $\mathcal C$ (plain line for the case 1 and dotted line for the case 2; large line for the states with a large occupation and thin line for the states with a small occupation). We show the events $\bullet$ (transition $\zeta_1 \to \zeta_2$), $\blacktriangle$ (transition $\phi_2 \leftrightarrow \phi_5$), $\blacklozenge$ and $\clubsuit$ (beginnings of non-adiabatic exchanges), $\star$ (transition $\phi_5 \leftrightarrow \phi_8$), $\blacksquare$ (transition $\zeta_2 \to \zeta_3$), and $\spadesuit$ and $\heartsuit$ (transition $\phi_6 \leftrightarrow \phi_8$).}
\end{center}
\end{figure}
\begin{figure}
\begin{center}
\includegraphics[width=15.5cm]{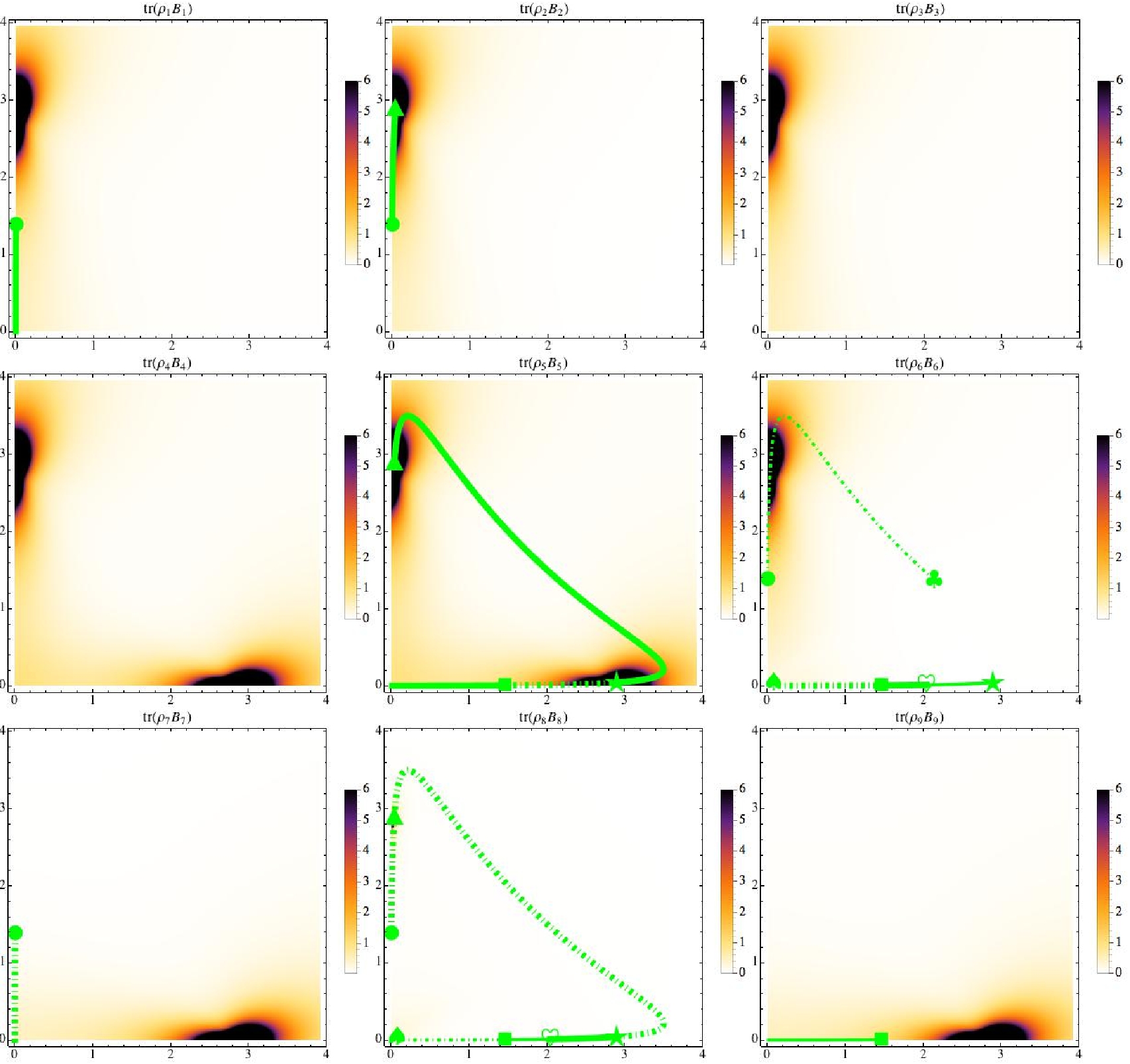}
\caption{\label{curving_dynamic} Densities of the average curving $\tr_\S(\rho_a B_a)$ in $M$,with the path $\mathcal C$ (plain line for the case 1 and dotted line for the case 2; large line for the states with a large occupation and thin line for the states with a small occupation). We show the events $\bullet$ (transition $\zeta_1 \to \zeta_2$), $\blacktriangle$ (transition $\phi_2 \leftrightarrow \phi_5$), $\blacklozenge$ and $\clubsuit$ (beginnings of non-adiabatic exchanges), $\star$ (transition $\phi_5 \leftrightarrow \phi_8$), $\blacksquare$ (transition $\zeta_2 \to \zeta_3$), $\spadesuit$ and $\heartsuit$ (transition $\phi_6 \leftrightarrow \phi_8$). }
\end{center}
\end{figure}
\begin{figure}
\begin{center}
\includegraphics[width=5.2cm]{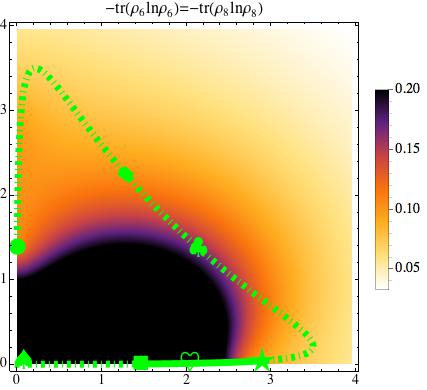}
\caption{\label{geoentropy_dynamic} Eigenentropies $-\tr_\S(\rho_6 \ln \rho_6) = -\tr_\S(\rho_8 \ln \rho_8)$ in $M$ (the other eigenentropies are zero), with the path $\mathcal C$ (plain line for the case 1 and dotted line for the case 2) and the events $\bullet$ (transition $\zeta_1 \to \zeta_2$), $\blacktriangle$ (transition $\phi_2 \leftrightarrow \phi_5$), $\blacklozenge$ and $\clubsuit$ (beginnings of the increase of the entropy of the system), $\star$ (transition $\phi_5 \leftrightarrow \phi_8$) and $\blacksquare$ (transition $\zeta_2 \to \zeta_3$).}
\end{center}
\end{figure}
We summarize the different events:
\begin{itemize}
\item[$\bullet$] the path $\mathcal C$ passes through a singularity of $\tr_\S (\rho_1 F_1)$ and $\tr_S (\rho_{2} F_{2})$ (or of $\tr_\S (\rho_7 F_7)$ and $\tr_S (\rho_{8} F_{8})$ for the second case) which induces a transition $P_1 \to P_2$ for the system;
\item[$\blacktriangle$] (only for the first case) $\mathcal C$ passes through a singularity of $\tr_\S (\rho_2 B_2)$ and $\tr_S (\rho_5 B_5)$ which induces a transition $\phi_2 \leftrightarrow \phi_5$ (without non-adiabatic effects for the system);
\item[$\blacklozenge$] (only for the second case) $\mathcal C$ approaches the region with large eigenentropies $-\tr(\rho_6 \ln \rho_6)$ and $-\tr(\rho_8 \ln \rho_8)$, which induces strong non-adiabatic oscillations between $\phi_6$ and $\phi_8$;
\item[$\clubsuit$] (only for the second case) $\mathcal C$ passes close to the region with large eigenentropies $-\tr(\rho_6 \ln \rho_6)$ and $-\tr(\rho_8 \ln \rho_8)$, the non-adiabatic oscillations present a maximal amplitude;
\item[$\star$]  $\mathcal C$ passes through a singularity of $\tr_\S (\rho_5 B_5)$ and $\tr_\S (\rho_8 F_8)$ which induces a transition $\phi_5 \leftrightarrow \phi_8$; with the increase of the entropy of the system (with non-adiabatic exchanges) for the first case and with a modification of the entropy oscillations in the second case;
\item[$\heartsuit$] (only for the first case) $\mathcal C$ passes through a singularity of $\tr_\S (\rho_6 F_6)$ and $\tr_\S(\rho_8 F_8)$ which induces a transition $\phi_8 \to \phi_6$.
\item[$\blacksquare$] $\mathcal C$ passes through a singularity of $\tr_\S (\rho_5 F_5)$, $\tr_\S (\rho_6 F_6)$, $\tr_\S (\rho_8 F_8)$ and $\tr_\S(\rho_9 F_9)$ (first case), or of $\tr_\S (\rho_7 F_7)$ and $\tr_\S(\rho_8 F_8)$ (second case), which induces a transition $P_2 \to P_3$ for the system; for the second case $\mathcal C$ passes through the region with large eigenentropies $-\tr(\rho_6 \ln \rho_6)$ and $-\tr(\rho_8 \ln \rho_8)$ which induces a strong increase of the entropy of the system with strong non-adiabatic exchanges.
\item[$\spadesuit$] (only for the second case) $\mathcal C$ passes through a singularity of $\tr_\S (\rho_6 F_6)$ and $\tr_\S(\rho_8 F_8)$ which induces a transition $\phi_6 \to \phi_8$.
\end{itemize}

\section{Discussion and conclusion}
\subsection{Larger environments}
The results presented section 3 are independent of the dimensions of the Hilbert spaces of the system and of the environment. But the example treated section 3 concerns a small system (three level atom) with a very small environment (another three level atom). The restriction to few levels for the system is not drastic. By principle, adiabatic quantum control consists to control the occupation probabilities of few levels. For $N$ level system (with possibly $N$ very large) we can consider only few parameter dependent eigenlevels linked by crossings or avoid crossings to the initial one. The adiabatic elimination of the other states is valid while the control path does not approach the crossings between a selected eigenlevel and an eliminated eigenlevel. By the property 2, we know that the (fake) curvature presents a singularity at such a point for the implicated selected eigenlevel and which is not correlated with a singularity of another selected eigenlevel (since the crossing implicated an eliminated eigenlevel). It is then easy to locate on the ``density charts'' these forbidden regions (associated with fails of the adiabatic assumption).\\

For a complicated environment (with a lot of quantum levels) the analysis exposed in the present paper could be difficult to realize without other assumptions because the number of ``density charts'' of adiabatic field strengths to study becomes very large. Moreover, in a lot of situations, a complete and exact description of the environment (its Hamiltonian and the coupling operator) is unknown. To solve the first problem, we can consider the possibility of using effective small environments reproducing the effects of a large environment on the quantum system. For the second one, we can work only at the stage of the density matrices. We briefly discuss about these two possibilities in this section. 

\subsubsection{Effective Hamiltonians for large environment: }
If $\dim \E$ is very large, the number of ``density charts'' to draw becomes too large to be realized. To solve this problem, we can proceed, in a first approximation, by adiabatic eliminations. Let $\{\xi_\alpha(x)\}_{\alpha \in \{1,..., \dim \E\}}$ be the parameter dependent eigenvectors of $H_{\mathcal E}(x)$. We assume that for physical reasons, only a few of these states $\{\xi_\alpha(x)\}_{\alpha \in I}$ are strongly implicated at the starting point of the control $x_0 \in M$ (each considered control path starting and ending at $x_0$). We can then reduce the number of the environment states by projections onto the space spanned by $\{\xi_\alpha(x)\}_{\alpha \in I}$. In other words, we consider (for the eigenvector problem) the effective Hamiltonian $H^{eff,0}_{\mathcal U}(x) = H_{\mathcal S}(x) \otimes 1_\E + 1_{\S} \otimes P_I(x) H_{\mathcal E}(x) P_I(x) + 1_\S \otimes P_I(x) V(x) 1_\S \otimes P_I(x)$ where $P_I(x)$ is the orthogonal projection onto the space spanned by $\{\xi_\alpha(x)\}_{\alpha \in I}$ (for the dynamics, it is associated with the effective Hamiltonian $H^{eff-dyn,0}_{\mathcal U} = H^{eff,0}_{\mathcal U} - \ihbar 1_\S \otimes \dot P_I P_I$ -- the effective theories involve two effective Hamiltonians, one for the computation of the effective eigenvectors and the other for the computation of the effective dynamics --). If the control is adiabatic with respect to the environment dynamics, this approximation is valid while the control path does not approach a crossing between a selected and an eliminated environment levels. The forbidden regions of $M$ by this requirement are characterized by singularities of the curving (by the relation between the curving and the environment curvature and the property 2). We can then locate these forbidden regions on the ``density charts''.\\

For example if we suppose that $x_0$ corresponds to the situation where the system and the environment are free (the control apparatus does not act on the quantum objects, then we can suppose that a very large environment is in a thermal equilibrium state : $ \rho_{\mathcal E} = \frac{e^{- \beta H_{\mathcal E}(x_0)}}{Z}$ (where $\beta = \frac{1}{k_B T}$, $T$ being the temperature and $k_B$ is the Boltzmann constant), with $Z = \tr_\E e^{- \beta H_{\mathcal E}(x_0)}$. The different initial conditions are then
\begin{equation}
\rho_{\mathcal U,i}(0) = \sum_{\alpha =1}^{\dim \E} \frac{e^{-\beta \nu_\alpha(x_0)}}{Z} |\phi_{a(i,\alpha)}(x_0) \rrangle \llangle \phi_{a(i,\alpha)}(x_0)|
\end{equation}
where $\nu_\alpha$ is the eigenvalue associated with $\xi_\alpha$ and $\phi_{a(i,\alpha)}$ is the eigenstate of $H_{\mathcal U}$ such that
\begin{equation}
\lim_{g \to 0} \phi_{a(i,\alpha)} = \zeta_i \otimes \xi_\alpha
\end{equation}
with $\{\zeta_i(x)\}$ the eigenstates of $H_{\mathcal S}$ and $g$ is the weak system-environment coupling amplitude. $\tr_\E \rho_{\mathcal U,i}(0) = \rho_{\mathcal E} + \mathcal O(g)$. The density matrix of the system with respect to the time is then
\begin{eqnarray}
& & \rho_{\mathcal S,i}(t) \nonumber \\
& & = \sum_{\alpha=1}^{\dim \E}  \frac{e^{-\beta \nu_\alpha(x_0)}}{Z} \tr_\E \left(U(t,0) |\phi_{a(i,\alpha)}(x_0) \rrangle \llangle \phi_{a(i,\alpha)}(x_0)| U(t,0)^\dagger \right)
\end{eqnarray}
where $U(t,0)$ is the evolution operator of the universe, i.e. the solution of $\ihbar \frac{dU(t,0)}{dt} = H_{\mathcal U}(x(t)) U(t,0)$ with $U(0,0) = 1_{\S \otimes \E}$. We can consider only the states $\{\xi_\alpha\}$ associated with the smallest energies (the Boltzmann factors $e^{-\beta \nu_\alpha(x_0)}$ of the others being negligible) and with the assumption that the control path does not approach crossings between the selected and the eliminated states, we have
\begin{eqnarray}
& & \rho_{\mathcal S,i}(t) \simeq  \sum_{\alpha \in I}  \frac{e^{-\beta \nu_\alpha(x_0)}}{Z} \\
& & \quad \times \tr_\E \left(U^{eff,0}(t,0) |\phi_{a(i,\alpha)}(x_0) \rrangle \llangle \phi_{a(i,\alpha)}(x_0)| U^{eff,0}(t,0)^\dagger \right)
\end{eqnarray}
with $\ihbar \frac{dU^{eff,0}(t,0)}{dt} = H_{\mathcal U}^{eff,0}(x(t)) U^{eff,0}(t,0)$. The control of $\rho_{\mathcal S,i}(t)$ can then be deduced from the control of the states $\{U(t,0) \phi_{a(i,\alpha)}(x_0)\}_{i\in\{1,...,\dim \S\},\alpha \in I}$ at the adiabatic limit with reasonable number of density charts to consider.\\

Nevertheless, the replacement of the universe Hamiltonian by $H^{eff,0}_{\mathcal U}$ is certainly too drastic for a lot of situations. We can indeed think that the requirement for the control to be strictly adiabatic with respect to the environment dynamics is difficult to assume since the ``experimentalist'' which controls the system ``loses'' the informations concerning the environment (a lost of information whih is mathematically modeled by the partial trace on $\E$). Recently, we have proposed a framework to treat almost adiabatic dynamics \cite{Viennot5}. It consists to consider (for the eigenvector problem) the effective Hamiltonian $H^{eff,1}_{\mathcal U} = 1_\S \otimes P_I H_{\mathcal U} \Omega$ where $\Omega$ (called the wave operator) is solution of the Bloch equation $[H_{\mathcal U},\Omega]\Omega = 0$ with $\Omega 1_\S \otimes P_I = \Omega$ (a complete exposition of the wave operator theory can be found in \cite{Killingbeck,Jolicard}, the effective Hamiltonian for the dynamics being $H^{eff-dyn,1}_{\mathcal U} = H^{eff,1}_{\mathcal U} - \ihbar 1_S \otimes \dot P_I \Omega$). The interest of $H^{eff,1}_{\mathcal U}$ with respect to $H^{eff,0}_{\mathcal U}$ is that $H^{eff,1}_{\mathcal U}$ takes into account non-adiabatic dynamical effects induced by the outer of the space spanned by $\{\xi_\alpha(x)\}_{\alpha \in I}$. It takes into account corrections of the strict adiabatic approximation. It is important to note that $H^{eff,1}_{\mathcal U}$ is not self-adjoint and generates then a non-unitary evolution. This is not an artefact and it is one of the corrections of the strict adiabaticity. A decrease of the norm is associated with transitions from $\S \otimes \Ran P_I$ to $\S \otimes \ker P_I$ and an increase is associated with converse transitions. This effective Hamiltonian cannot include precise information about the eliminated states but it includes informations concerning ``quantum flows'' entering and exiting from the $\S \otimes \Ran P_I$. We can compute the fake curvature and the curving by using the eigenstates of $H^{eff,1}_{\mathcal U}$ to draw a small number of ``density charts'', which permit to analyse the control problem (because of the non-selfadjointness of $H^{eff,1}_{\mathcal U}$ it needs to redefine the generator of the $C^*$-geometric phase by $\mathcal A_a = \tr_\E (|d\phi_a \rrangle \llangle \phi_a^*|) \rho_a^{-1}$ with $\rho_a = \tr_\E|\phi_a \rrangle \llangle \phi_a^*|$ where $\{\phi_a^*\}_a$ are the eigenvectors of ${H^{eff,1}_{\mathcal U}}^\dagger$ to take into account the biorthogonality of the eigenbasis).\\

We note that the introduction of an effective Hamiltonian to reduce the size of the Hilbert space of the universe (needed to have a small number of density charts to consider) introduces few approximations in the description of the system. But the consideration of the density charts of the curving and of the fake curvature is not used to obtain quantitative results but only qualitative results. As in the example section 4, we can use it \textit{a posteriori} to analyse and to interpret the hampering on the control by the entanglement with the environment, and to understand the different processes occurring. And we can use it \textit{a priori} to find the rough shape of a control solution path taking into account the hampering by the environment (such a solution could be next optimized by specific numerical methods \cite{Bonnard}).

\subsubsection{Working directly at the stage of the density matrices:} The density matrix of the system is solution of the equation
\begin{equation}
\ihbar \dot \rho_{\mathcal S}(t) = \mathcal L_{x(t)}(\rho_{\mathcal S}(t))
\end{equation}
where in the context used in this paper $\mathcal L_{x(t)}(\rho_{\mathcal S}(t)) = \tr_\E [H_{\mathcal U}(x(t)) , |\psi(t) \rrangle \llangle \psi(t)| ]$. But in a lot of situations, a complete description of $H_{\mathcal U}$ is unknown or it is not practical to use. Under some assumptions (in particular with a Markovian approximation) the map $\mathcal L_x$ takes the Lindblad form \cite{Breuer} :
\begin{eqnarray}
\mathcal L_x(\rho) & = & [\tilde H_{\mathcal S}(x),\rho] + \imath \sum_k \gamma_k(x) \Gamma_k(x) \rho \Gamma_k(x)^\dagger \nonumber \\
& & \quad - \frac{\imath}{2} \sum_k \gamma_k(x) \{\Gamma_k(x)^\dagger \Gamma_k(x), \rho\}
\end{eqnarray}
$\gamma_k \in \mathbb R$, $\Gamma_k \in \mathcal L(\S)$, $\tilde H_{\mathcal S}$ being a system Hamiltonian possibly different from the free Hamiltonian $H_{\mathcal S}$ ($\{.,.\}$ being the anticommutator). It is interesting to note that the density eigenmatrix $\rho_a$ when it is induced by an eigenvector of $H_{\mathcal U}$ satisfies
\begin{equation}
\label{eqMD1}
\mathcal L_x(\rho_a(x)) = 0
\end{equation}
$\rho_a$ is a steady state of the system. We postulate that even if $H_{\mathcal U}$ is unknown or forgotten, that the density eigenmatrix are still parameter dependent steady state of the system governed by $\mathcal L_x$. Starting with a steady state $\rho_a(x(0))$, the dynamics, if it is sufficiently adiabatic, follows $\rho_a(x(t))$ except in the regions of rapid adiabatic passages (corresponding to the crossings of eigenlevels of $H_{\mathcal U}$) and in the regions inducing local or kinematic decoherence. Since we cannot use $H_{\mathcal U}$ which is unknown or forgotten, it needs to compute the curving and the fake curvature directly from $\rho_a$. Singularities and entropic strings of these fields indicate these regions of $M$. The generator of the $C^*$-geometric phase is solution of the equation \cite{Viennot1}
\begin{equation}
\label{eqMD2}
d \rho_a = \mathcal A_a \rho_a + \rho_a \mathcal A_a^\dagger
\end{equation}
After solving this equation we have $B_a = d\mathcal A_a - \mathcal A_a \wedge \mathcal A_a$. For eigenvectors $\phi_a$ of $H_{\mathcal U}$ associated with a non-degenerate eigenvalues (as in the example section 4), we have by construction $A_a = \llangle \phi_a|d\phi_a \rrangle 1_{\S}$. But $\llangle \phi_a|d\phi_a \rrangle = \tr_\S(\rho_a \mathcal A_a)$ (see \cite{Viennot2}), we can then compute directly the reduced potential as
\begin{equation}
\label{eqMD3}
A_a = \tr_\S(\rho_a \mathcal A_a) 1_\S
\end{equation}
and then $F_a = d \tr_\S (\rho_a \mathcal A_a) 1_\S - B_a$.\\
By solving equations \ref{eqMD1}, \ref{eqMD2} and \ref{eqMD3} we can finally draw ``density charts'' of the fields of the higher gauge structure by working only at the stage of the density matrix formalism.\\
Other approaches of adiabatic quantum control of open systems \cite{Sarandy1, Sarandy2, Pekola} are based on the consideration of a density matrix as being an Hilbert-Schmidt operator and $\mathcal L_x$ as being a non-selfadjoint operator on the Hilbert-Schmidt space (the Hilbert-Schmidt space is so-called Liouville space and the operators on the Hilbert-Schmidt space are so-called ``superoperators''). We are interested by a comparison of these adiabatic approaches with the present one because they are also completely generic: as our approach, they do not require a particular form of the control Hamiltonian and they can be used with several independent control parameters (adiabatic approaches with a single control parameter have a trivial differential geometry and cannot be analysed by a geometric framework since the problem is ``under-parametrized'', in other words the Hamiltonian matrix of the problem is not ``versal'' in the sense of the Arnold's theory \cite{Arnold}). It seems then pertinent to compare the present approach with the adiabatic quantum control methods based on the Hilbert-Schmidt representation because they have the same possibilities of geometric analyses. The density eigenmatrices considered in the works \cite{Sarandy1, Sarandy2, Pekola} are eigenvectors of $\mathcal L_x$ in the Hilbert-Schmidt sense. This approach induces an (usual) gauge theory with a single true curvature. We have seen in this paper, that our approach with a higher gauger theory induces two fields (the fake curvature and the curving) which are complementary to describe non-adiabatic transition regions and kinematic decoherence. Our approach is able to interpret the dynamical phenomenon occurring during the control, and to discriminate purely non-adiabatic effects (similar to non-adiabatic effects of closed quantum systems, which are essentially associated with the fake curvature) from kinematic decoherence effects (hampering of the adiabatic control by the entanglement, which is essentially associated with the curving). A single gauge field as the one defined by the Hilbert-Schmidt analysis characterizes a mixing of these different dynamical processes, and does not provide clear interpretations.

\subsection{Conclusion}

The higher gauge structure (\cite{Baez}) associated with the $C^*$-geometric phases \cite{Viennot1} involves two fields. The average fake curvature is a measure of the non-adiabaticity of the system entangled with the environment (as the usual curvature for an isolated system). Nevertheless for invertible eigenmatrices (corresponding to a strong information lost associated with the partial trace $\tr_\E$) inactive singularities associated with the non-adiabaticity of the environment occurs. This is a drawback in the interpretation of this field. The average curving is a measure of the ``kinematic decoherence'' associated with variations of the entropy and of the entanglement between the system and the environment. It is essentially induced by the non-adiabaticity of the environment. In the example treated in this paper (the STIRAP system), it is zero for some invertible eigenmatrices. This seems to be the signature that the non-adiabaticities of the environment do not induce increases of the entanglement and of the entropy in these cases. But in these cases, the curving does not play its second role which consists to ``kill'' the inactive singularities in the fake curvature. Finally the von Neumann entropy of the eigenmatrice measures the ``local decoherence'' induced by the ``non-dynamical'' entanglement of the system with the environment appearing directly at the level of the eigenstates of the universe.\\
The method consisting to draw the densities of the different field strengths is able to interpret the hampering by entanglement on the adiabatic quantum control. It is also able to distinguish in the geometric representation of the universe dynamics, the non-adiabatic effects concerning the system from the decoherence effects associated with the environment. The study of the adiabatic fields can then be a tool to establish adiabatic control strategies for a system in contact with an environment. But as we can see for the example treated in the present paper (the STIRAP system entangled with another STIRAP system), it can be very difficult to completely avoid the decoherence regions (especially if an ``entropic string'' split into two parts the control manifold as it is the case for the example with a static coupling).

\section*{References}

\end{document}